\documentclass[11pt,oneside,letterpaper]{article}
\usepackage[mode=buildnew]{standalone}
\usepackage{eurosym}
\usepackage{graphicx,amsmath,amsfonts}
\usepackage{setspace}
\usepackage{amssymb}
\usepackage{dsfont}
\usepackage{amsmath}
\usepackage{amsfonts}
\usepackage{ifthen}
\usepackage{mathtools}
\usepackage{graphicx}
\usepackage{enumitem}
\usepackage{color}
\usepackage{framed}
\usepackage[margin=1in]{geometry}
\usepackage{natbib}
\usepackage[dvipsnames]{xcolor}
\usepackage[colorlinks=true, urlcolor=Mahogany, linkcolor=Mahogany, citecolor=Mahogany]{hyperref}
\usepackage{fp}
\usepackage{amsthm}
\usepackage{caption}
\usepackage{subcaption}
\usepackage{palatino}
\usepackage{cuted,balance}
\usepackage{amssymb}
\usepackage{soul}
\usepackage{graphicx,amsmath,amsfonts}
\usepackage{float}
\usepackage{bbm}
\usepackage{bm}
\usepackage{pgfplots}
\pgfplotsset{compat=1.18}
\usepackage{tikz}
\usetikzlibrary{positioning,intersections,pgfplots.fillbetween}

\usepackage{macros}

\usepackage[ruled]{algorithm2e}

\begin{document}

\title{Signaling in Data Markets via Free Samples}
\author{
Nivasini Ananthakrishnan\thanks{Department of Electrical Engineering and Computer Sciences, University of California, Berkeley} \and Alireza Fallah\thanks{Department of Computer Science and Ken Kennedy Institute, Rice University; Simons Institute for the Theory of Computing, University of California, Berkeley} \and Michael I. Jordan\thanks{Departments of Electrical Engineering and Computer Sciences and Statistics, University of California, Berkeley; Inria Paris}}

\date{\today}

\maketitle

\sloppy

\begin{abstract}
We study a setting in which a data buyer seeks to estimate an unknown parameter by purchasing samples from one of $K$ data sellers. Each seller has privately known data quality (e.g., high vs. low variance) and a private per-sample cost. We consider a multi-stage game in which the first stage is a free-trial stage in which the sellers have the option of signaling data quality by offering a few samples of data for free. Buyers update their beliefs based on the sample variance of the free data and then run a procurement auction to buy data in a second stage. For the auction stage, we characterize an approximately optimal Bayesian incentive compatible mechanism: the buyer selects a single seller by minimizing a belief-adjusted virtual cost and chooses the purchased sample size as a function of posterior quality and virtual cost. For the free-trial stage, we characterize the equilibrium, taking the above mechanism as the continuation game. Free trials may fail to emerge: for some parameters, all sellers reveal zero samples. However, under sufficiently strong competition (large $K$), there is an equilibrium in which sellers reveal the maximum allowable number of samples; in fact, it is the unique equilibrium.
\end{abstract}

\section{Introduction}
Data markets are increasingly central to economic and scientific activity: a decision maker can purchase datasets from competing providers to estimate a latent quantity of interest, yet the quality of these datasets is often only known to the sellers and is difficult to verify ex ante. A natural response---mirroring ``free trials'' in software and consumer goods---is to let sellers provide a small number of samples for free so that buyers can assess quality before purchasing. Whether such free trials arise in equilibrium, however, is not obvious. Data sellers may not have full control over the quality of their raw data, and it may be costly to perform quality control, so that committing in advance to providing free samples can be risky. Moreover, free trials can intensify competition and reduce informational rents. This paper studies when, and why, free trials emerge in competitive data markets, and when they fail to do so.

We model a market with $K$ data sellers and a continuum of buyers. A buyer seeks to estimate an unknown parameter $\theta$ and can obtain i.i.d.\ samples from seller $i$ drawn from a normal distribution $N(\theta,\sigma_i^2)$, where $\sigma_i \in \{\sigma_L,\sigma_H\}$ captures the seller's (privately known) data quality, with $\sigma_L < \sigma_H$. Each seller also has a private per-sample cost $c_i$, drawn independently from a continuous distribution. The interaction is a multi-stage game. First, before any buyer arrives, each seller chooses a free-trial size $m_i$, between zero and a maximum $M$, interpreted as the number of samples it will reveal for free. Next, a buyer arrives; each seller observes the sample quality it can offer for that buyer’s demand and the cost at which it can do so. Then, each seller submits a cost report $c_i'$ and a free-sample dataset of size $m_i$. 
We assume that this dataset is used only for quality evaluation and cannot be used in the final estimation, which is consistent with “evaluation-only” licensing and other practical restrictions, as discussed in the model section.
The buyer then forms beliefs about each seller's quality from the empirical variance of the submitted dataset. Given these beliefs and the cost reports, the buyer runs a procurement mechanism to decide from whom to purchase additional samples, how many to purchase, and what to pay. The buyer then estimates $\theta$ using a weighted average of the purchased samples. 

A key challenge is that the free-trial stage feeds into a downstream Bayesian mechanism-design problem in which the allocation must be Bayesian incentive compatible (BIC) in sellers’ reported costs, while the buyer’s objective trades off estimation precision and payments. We first derive an approximately optimal mechanism for the procurement stage and then analyze equilibrium free-trial choices, taking this mechanism as the continuation game. Concretely, we relax the buyer’s problem by allowing the purchased sample size to be real-valued, solve the relaxed problem, and then obtain a feasible mechanism by rounding down the resulting sample size. We show that, under standard regularity conditions on the cost distribution, the optimal mechanism for the relaxed problem (and hence its rounded version) has a simple \textit{``single-sourcing" }form: the buyer purchases data from only one seller.

In particular, using the free samples, the buyer forms posterior beliefs and then, for each seller, computes a belief-adjusted cost index that combines the seller’s reported (virtual) cost with the buyer’s posterior assessment of that seller’s variance. The mechanism selects a single winning seller $i^\star$ by optimizing this index—equivalently, by choosing the seller with the best trade-off between low inferred variance and low virtual cost. By rounding down the resulting allocation and choosing the payment rule via a standard Myerson-style construction that ensures BIC, we obtain a mechanism that is approximately optimal in terms of buyer utility: it incurs only a small loss relative to the optimal BIC mechanism, while preserving the clean structural characterization of single-sourcing mechanisms.

Motivated by this result, we analyze the induced free-trial stage when the buyer uses the mechanism above as the continuation game. In particular, we characterize the approximate subgame-perfect Bayesian Nash equilibria in sellers' choices of $(m_i)_{i=1}^K$ (approximation enters only through the buyer’s continuation policy; sellers best-respond exactly). Our first theorem shows that free trials can fail to emerge even in competitive markets.

\paragraph{Theorem 1 (Informal)}
For any number of sellers $K$, there exist range of parameters, i.e., choices of $(\sigma_L,\sigma_H)$ and the prior belief of the buyer, in which there exists an approximate equilibrium where all sellers reveal no free samples, $m_i = 0$ for every $i$.

\vspace{2mm}

The main intuition behind this ``uninformative'' equilibrium is that committing to share free samples carries two risks. First, even when a seller is truly low-variance, although revealing samples makes the buyer more confident in selecting them, it can also reduce the seller’s payment conditional on being selected, because the buyer believes fewer samples are needed to achieve the desired accuracy. In other words, the seller may win the competition but end up selling fewer samples than they would have sold had they won under the buyer’s prior belief without sharing any samples. Second, if the seller’s realized data quality is poor (i.e., high-variance), revealing samples makes the buyer more pessimistic and can sharply reduce the probability of being selected. In some parameter regimes, these two effects dominate any benefit from signaling, so the safe strategy is to reveal nothing, and no seller wants to deviate unilaterally from offering zero free samples. 

Our second theorem shows that the picture reverses under sufficiently strong competition.

\paragraph{Theorem 2 (Informal)}
For any fixed underlying parameters, there exists a threshold $\bar K$ such that for all $K \ge \bar K$, there is an approximate equilibrium in which every seller reveals the maximum allowable number of free samples, $m_i = M$ for all $i$. Moreover, assuming the buyer utilizes the approximate mechanism presented above, this is the unique equilibrium.

\vspace{2mm}

The central insight here is that with many sellers, selection becomes extremely sensitive to small differences in the buyer's posterior assessment of quality. Providing more free samples tightens the posterior, pushing the seller's posterior mean variance closer to $\sigma_L$ in favorable realizations; because the procurement mechanism selects the seller with the best belief-adjusted virtual cost index, even modest improvements in perceived quality can translate into a first-order advantage in winning probability. When $K$ is large, this competitive pressure overwhelms the cost of giving away samples: deviating to fewer than $M$ samples makes it increasingly likely that some rival both looks sufficiently high-quality and has sufficiently low cost, causing the deviator's selection probability (and hence expected profit) to collapse. As a result, maximal disclosure becomes self-enforcing, and in fact becomes the only equilibrium as $K$ grows. 

Finally, to explore the landscape of symmetric equilibria beyond the two extremes identified in Theorems 1 and 2, we conduct numerical simulations across a range of problem parameters $(\sigma_L,\sigma_H,K,M)$. We find that there exist symmetric equilibria in which every seller chooses an intermediate free-trial size, $m_i=m^*$ with $0 < m^* < M$, and that multiple symmetric equilibria, corresponding to different disclosure levels, can coexist for the same parameter regime.

Our results highlight how ``free samples'' can serve as a signal of data quality in data markets. Free trials are not guaranteed to arise simply because they can be informative; in some parameter regimes, they unravel entirely and the market can remain maximally opaque even with many sellers. At the same time, once competition is sufficiently intense, free trials become a powerful competitive instrument: sellers disclose aggressively to avoid being screened out by the buyer’s belief-adjusted selection rule, and the market converges to full disclosure.

\subsection{Related Work}\label{sec:related_work}

\paragraph{Data markets.} Our work lies in the literature on data markets~\citep[see, e.g.,][]{bergemann2019markets,acemoglu2022too}, and more specifically in the design of mechanisms for data acquisition to estimate an underlying population statistic~\citep{cummings2015accuracy,chen2018optimal, fallah2024optimal, fallah2022bridging, cummings2023optimal, liao2024privacy}, which is closest to our setting, as well as mechanisms for learning machine learning models~\citep{agarwal2019marketplace, cai2015optimum, li2021data, chen2022selling, saig2023delegated,ananthakrishnan2024delegating}. In many of these works, the data buyer is assumed to know, observe, or control data quality through mechanism design or the design of sampling and evaluation strategies. In contrast, our work allows sellers to opt in to sharing samples and thereby influence the resulting equilibrium behavior.

\paragraph{Strategic information revelation in data markets.} Strategic information revelation in data markets has been studied for various modes of information revelation. For example, sellers may disclose reviews or ratings~\citep{ifrach2019bayesian, boursier2022social,han2020customer, chamley2004rational,
crapis2017monopoly, besbes2018information, acemoglu2022learning}. Data owners may obfuscate the quality of their data to guard against price discrimination~\citep{acquisti2005conditioning, taylor2004consumer, conitzer2012hide, ichihashi2020online}. The mode we study in our work is that of revealing information through providing free samples. 

\paragraph{Information revelation through free samples.} Previous work by~\cite{drakopoulos2023providing} also studies free samples in data markets. They focus on the \emph{hold up} problem, which involves preventing the buyer from free-riding on the seller's free samples. In contrast, our focus is on how free samples shape buyer's beliefs about the quality of data. There is a rich literature on free samples or trials in broader settings~\citep{nelson1970information, shapiro1983premiums, bergemann2006dynamic, heiman2001learning, bawa2004effects, boleslavsky2017demonstrations, cheng2010free, board2018competitive}.

\paragraph{Competition and externalities in data markets.}
Previous work has studied how data platforms shape competition among sellers~\citep{bergemann2015selling, bergemann2022economics, ichihashi2021economics, de2025data, fallah2024three}. There is also work studying platform level competition and coordination in data markets~\citep{ichihashi2021competing, gu2022data, abrardi2025data}. Finally, there is work studying competition among buyers in data markets~\citep{bonatti2024selling, montes2019value, bimpikis2019information}. Our work focuses on the effect of competition in data markets when there is an option to provide free samples.

\paragraph{Persuasion and information design.} The role of free samples in shaping buyer's beliefs about data quality is closely related to persuasion and information design~\citep{kamenica2011bayesian, bergemann2019information,kamenica2019bayesian}. Particularly related is work on how competition increases information revelation~\citep{gentzkow2016competition, au2020competitive}. Our results can be viewed as an analogue where there are structural constraints on how signals can be generated; namely, through providing free samples, only the sample size can be controlled. Persuasion and information design has been studied in the context of auctions~\citep{agarwal2024towards, bergemann2017first, esHo2007optimal, bergemann2007information}, pricing~\citep{bergemann2018design}, and mechanism design~\citep{babaioff2012optimal,bonatti2024selling}.
\section{Model}

We consider a market with $K$ data sellers who interact with a continuum (mass) of data buyers, denoted by $\mathcal{B}$. Each buyer $B \in \mathcal{B}$ seeks to estimate a parameter $\theta(B)$. For a given buyer $B$, seller $i \in [K]$ can provide i.i.d.\ samples from a normal distribution, $\mathcal{N}(\theta(B), \sigma_i(B)^2)$, where $\sigma_i(B) \in \{\sigma_L, \sigma_H\}$ represents the quality of seller $i$'s data for buyer $B$, with $\sigma_L < \sigma_H$. The cost per sample for seller $i$ to provide data to buyer $B$ is denoted by $c_i(B)$. This cost can be interpreted either as the operational cost of generating/collecting data (e.g., running a test or using an imaging device) or as a privacy cost. When the buyer $B$ is clear from the context, we suppress the dependence of $\theta(\cdot)$, $\sigma_i(\cdot)$, and $c_i(\cdot)$ on $B$.

Ex ante (before the buyer is realized), the data variance for each seller is distributed as a Bernoulli random variable that takes value $\sigma_L$ with probability $\mu$ and $\sigma_H$ with probability $1-\mu$. The per-sample cost is drawn (ex ante) from a continuous distribution with pdf $f$ and cdf $F$ supported on $[c_{\min}, c_{\max}]$, where $c_{\min} > 0$. We assume these ex-ante distributions are independent and identical across sellers, although for a given buyer $B$ the realizations $(\sigma_i(B), c_i(B))$ may differ across sellers.

\paragraph{Timing and information.}
We model the interaction between sellers and a buyer as a multi-stage game. First, upon entering the market, each seller $i$ chooses the number of free samples, $m_i \in {0,\dots,M}$, they will share as a signal of data quality. Here, $m_i$ is a commitment policy: sellers do not yet share any samples, because the realized sample quality is not yet known.

Next, a buyer $B \in \mathcal{B}$ arrives, at which point the sample variance $\sigma_i^2$ and per-sample cost $c_i$ are realized for each seller $i$. The buyer then runs a procurement auction: each seller $i$ submits (i) a reported per-sample cost $c_i’$ and (ii) a dataset of free samples $\mathcal{S}_i^f$ of size $m_i$. Sellers need not report costs truthfully, and neither sellers’ true costs nor their true qualities are observed by the buyer. However, we assume the shared free samples are i.i.d. draws from the same distribution as the final (potentially purchased) data, so sellers cannot manipulate or cherry-pick their free samples.

Based on the free samples, the buyer forms a belief about seller $i$’s variance, denoted by $\pi_i(\mathcal{S}_i^f)$ (or simply $\pi_i$ when clear from context). We assume this belief depends on the sample variance of the reported free samples in $\mathcal{S}_i^f$. This restriction is imposed because sample variance is a mean-invariant statistic, which allows the buyer to update beliefs about data quality without specifying a prior over the mean $\theta(B)$. Formally, $\pi_i(\mathcal{S}^f_i)$ is obtained by Bayes’ rule from the Bernoulli prior that assigns probability $\mu$ to $\sigma_i^2=\sigma_L^2$ and $1-\mu$ to $\sigma_i^2=\sigma_H^2$, using the observed sample variance of the free samples in $\mathcal{S}^f_i$ as the signal (i.e., the likelihood is the sampling distribution of the sample variance under $\mathcal{N}(\theta(B),\sigma_i^2)$). Lastly, to ensure that the sample variance is well defined, we assume that $m_i$ is never equal to $1$; that is, $m_i \in \{0, 2, \cdots, M\}$.

Given reported costs $\bm{c}'$ and beliefs $\bm{\pi}$, the buyer chooses how many samples to purchase from each seller and the corresponding payments. Specifically, the buyer purchases a dataset $\mathcal{S}_i := \{x_1^i, \ldots, x_{n_i}^i\}$ of size $n_i(\bm{c}'; \bm{\pi})$ from seller $i$ and pays $t_i(\bm{c}'; \bm{\pi})$.

It is important to emphasize that the free samples can be used only for evaluation (belief updating) and are not retained as inputs for estimation. This assumption is motivated by the fact that, in practice, trial access to data is often governed by evaluation-only licensing and/or delivered through controlled environments that permit inference or measurement but restrict retention and downstream reuse, which may include model training  \citep[see, e.g.,][]{PreciselyDataEULA2021}.

\paragraph{Estimation.}
The buyer uses the purchased data to estimate $\theta(B)$ via the weighted average
\begin{equation}
\hat{\theta}(\mathcal{S}, \bm{w}) := \sum_{i=1}^K w_i \frac{1}{n_i} \sum_{j=1}^{n_i} x_j^i,  
\end{equation}
where $w_i \ge 0$ for all $i$ and $\sum_{i=1}^K w_i = 1$. Under the buyer's beliefs, the variance of this estimator is
\begin{equation}
\text{Var}(\bm{n}, \bm{w}; \bm{\pi})  = \sum_{i=1}^K w_i^2 \frac{\bar{\sigma}_i^2}{n_i},     
\end{equation}
where $\bar{\sigma}_i^2$ denotes the mean of seller $i$'s belief distribution over variances:
\begin{equation}
\bar{\sigma}_i^2 = \pi_i(\sigma_L^2) \sigma_L^2 + \pi_i(\sigma_H^2) \sigma_H^2.    
\end{equation}

\paragraph{Utility functions.}
Seller $i$'s utility equals the payment received minus the cost of producing the samples sold:
\begin{equation}
U_i(c_i, \bm{c}'; \bm{\pi}; \bm{n}, \bm{t}) := t_i(\bm{c}'; \bm{\pi}) - n_i(\bm{c}'; \bm{\pi}) c_i.   
\end{equation}
The buyer's utility trades off estimation precision against total payments:
\begin{align}
U_b(\bm{c}', \bm{w}; \bm{\pi}; \bm{n}, \bm{t}) &:= -\text{Var}(\bm{n}, \bm{w}; \bm{\pi}) - \lambda \sum_{i=1}^K t_i(\bm{c}'; \bm{\pi}),   
\end{align}
where $\lambda$ captures the buyer's relative weight on payments versus estimation error.

\paragraph{Solution concept.}
We use (approximate) subgame-perfect Bayesian Nash equilibrium. Fix any belief profile $\bm{\pi}$. By the revelation principle, we can restrict attention to direct mechanisms, so the allocation and payment rules $(\bm{n}, \bm{t})$ solve
\begin{subequations} \label{eqn:mechanism_design}
\begin{align}
\max_{\bm{w}, \bm{t}, \bm{n}}\, & \mathbb{E}_{\bm{c}} \left[ U_b(\bm{c}, \bm{w}; \bm{\pi}; \bm{n}, \bm{t}) \right] \label{eqn:mechanism_design_obj} \\
\quad \text{s.t.} & \quad  \mathbb{E}_{\bm{c}_{-i}} \left[ U_i(c_i, c_i, \bm{c}_{-i}; \bm{\pi}; \bm{n}, \bm{t}) \right] \geq \mathbb{E}_{\bm{c}_{-i}} \left[ U_i(c_i, c'_i, \bm{c}_{-i}; \bm{\pi}; \bm{n}, \bm{t}) \right] 
\quad \text{for all } i, c_i, c'_i,
\label{eqn:mechanism_design_IC} \\
& \quad  \mathbb{E}_{\bm{c}_{-i}} \left[ U_i(c_i, c_i, \bm{c}_{-i}; \bm{\pi}; \bm{n}, \bm{t}) \right] \geq 0 
\quad \text{for all } i, c_i.
\label{eqn:mechanism_design_IR}
\end{align}
\end{subequations}
Constraint \eqref{eqn:mechanism_design_IC} is Bayesian incentive compatibility (BIC): seller $i$ prefers truthful cost reporting when other sellers report truthfully. Constraint \eqref{eqn:mechanism_design_IR} is interim individual rationality (IR): each seller's expected utility from participating is nonnegative. 

We further say $(\bm{n}^*, \bm{t}^*)$ is a $\zeta$-approximate solution to the mechanism design problem \eqref{eqn:mechanism_design} if it satisfies the IC and IR constraints \eqref{eqn:mechanism_design_IC} and \eqref{eqn:mechanism_design_IR} exactly, and achieves expected buyer cost (negative of buyer utility) at most a factor $1+\zeta$ of the optimal expected buyer cost.

Next, fixing a buyer continuation mechanism $(\bm{n}^*, \bm{t}^*)$, we define seller $i$'s interim (ex-ante) payoff under a free-sample profile $\bm{m}$ by
\begin{equation}
\bar{U}_i(\bm{m}) := \mathbb{E}_{\bm{c}, \bm{\mathcal{S}}^f} \left [ U_i\!\left(c_i, \bm{c}; \{\pi_i(\mathcal{S}_i^f)\}_{i=1}^K; \bm{n}^*, \bm{t}^*\right)  \right],     
\end{equation}
where $|\mathcal{S}_i^f| = m_i$ for each $i$. This is seller $i$'s expected utility at the time sellers choose free-sample commitments, when sellers provide free samples according to $\bm{m}$ and the subsequent procurement stage uses $(\bm{n}^*, \bm{t}^*)$.

Lastly, we say $(\bm{m}^*, \bm{n}^*, \bm{t}^*)$ is a $\zeta$-approximate subgame perfect equilibrium if (i) $(\bm{n}^*, \bm{t}^*)$ is an $\zeta$-approximate solution to \eqref{eqn:mechanism_design}, and (ii) for every seller $i$,
\begin{equation}
\bar{U}_i(\bm{m}^*) \ge \bar{U}_i(m_i', \bm{m}_{-i}^*)
\end{equation}
for all $m_i' \in \{0,1,\ldots,M\}$. It is worth noting that the approximation enters only through the buyer's continuation mechanism: sellers' incentives are evaluated exactly, and sellers have no profitable deviation in the free-sample stage.
\section{The Mechanism Design Subproblem}

We start by analyzing the buyer’s choice of allocation and payment functions, given the free samples received from all data sellers and the induced belief profile $\bm{\pi}$. We first make the following customary regularity assumption on the distribution of sample costs, i.e., $F(\cdot)$, which ensures that the \textit{virtual cost}, as defined below, is non-decreasing.
\begin{assumption}[Monotonicity of virtual costs]\label{ass:virtual_costs_monotonicity}
    The virtual cost $\psi(c):= c + F(c)/f(c)$ is a non-decreasing function of $c$.
\end{assumption}
This assumption is standard in mechanism design and holds for a variety of distributions including uniform, normal, and exponential distributions  \citep[see, e.g.,][]{rosling2002inventory}.

To solve the mechanism design problem \eqref{eqn:mechanism_design}, we first relax the objective function in \Cref{eqn:mechanism_design_obj} by allowing the allocation of the sample size purchased from each seller to be a real-valued number instead of an integer. We then round down this real-valued number to the nearest lower integer to obtain the sample allocation, and we show that the resulting allocation is approximately optimal.

\begin{proposition}[Approximately Optimal Purchasing and Payment Rules]\label{prop:mechanism_soln}
Suppose \Cref{ass:virtual_costs_monotonicity} holds.
Given the sequence of free samples shared by the sellers $\left (\mathcal{S}^f_i \right )_{i=1}^K$ inducing beliefs $({\pi})_{i=1}^K$ with means $(\bar{\sigma}_i^2)_{i=1}^K$, there is a sample purchasing and payment rule that forms a Bayesian Incentive Compatible mechanism with the following properties. When the sellers report costs $c^{'}_1, \ldots, c^{'}_K$,
\begin{enumerate}
    \item The buyer buys only from one seller $i^* \in [K]$ with $i^* \in \text{argmin}_{i \in [K]} \bar{\sigma}_i^2 \psi({c'_i})$.

    \item The buyer purchases $\lfloor n^* \rfloor$ samples from seller $i^*$, where $n^* = \bar{\sigma_{i^*}} / \sqrt{\lambda \psi(c^{'}_{i^*})}$.

    \item The expected payment each seller receives is the expectation of product of the number of samples purchased from the seller and the virtual cost of the seller $\mathbb{E}_{\bm{c}}[n_i(\bm{c}; \bm{\pi}) \psi(c_i)]$. 

    \item This mechanism yields a utility that is suboptimal by a factor of $1 - \frac{1}{n^* - 1}$ compared to the optimal BIC mechanism.  
    \end{enumerate}
     
\end{proposition}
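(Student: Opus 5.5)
The plan is the standard Myerson reduction, adapted to procurement with a convex variance term. I start from the BIC and IR constraints in \eqref{eqn:mechanism_design}. Let $N_i(c_i) := \mathbb{E}_{\bm{c}_{-i}}[n_i(c_i,\bm{c}_{-i};\bm{\pi})]$ be seller $i$'s interim quantity and $T_i(c_i)$ its interim payment. BIC is equivalent to two conditions: $N_i$ is non-increasing in $c_i$, and the envelope identity
\[
T_i(c_i) - c_iN_i(c_i) = \bigl(T_i(c_{\max})-c_{\max}N_i(c_{\max})\bigr) + \int_{c_i}^{c_{\max}} N_i(s)\,ds
\]
holds. At the optimum IR binds at $c_{\max}$. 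Integrating by parts against $f$ then gives $\mathbb{E}[T_i(c_i)] = \mathbb{E}_{\bm{c}}[n_i(\bm{c};\bm{\pi})\psi(c_i)]$, which is item 3.

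Substituting into \eqref{eqn:mechanism_design_obj}, the buyer's expected cost becomes
\[
\mathbb{E}_{\bm{c}}\Bigl[\textstyle\sum_i w_i^2\bar\sigma_i^2/n_i + \lambda\sum_i n_i\psi(c_i)\Bigr],
\]
and this can be minimized pointwise in $\bm{c}$. I handle the inner problem in the relaxed version with real $n_i\ge 0$, in three steps.
\begin{enumerate}
\item Fix the total variance. For given $\bm{n}$, the optimal weights are $w_i\propto n_i/\bar\sigma_i^2$, so the variance equals $1/\sum_i n_i/\bar\sigma_i^2$.
\item Set $z_i=n_i/\bar\sigma_i^2$. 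The problem becomes $\min_{z\ge 0} 1/\sum_i z_i + \lambda\sum_i z_i\bar\sigma_i^2\psi(c_i)$. For fixed $Z=\sum_i z_i$, the linear term is minimized by putting all mass on $i^*\in\arg\min_i\bar\sigma_i^2\psi(c_i)$. This is single sourcing, item 1.
\item Optimize $1/Z + \lambda Z\bar\sigma_{i^*}^2\psi(c_{i^*})$ over $Z$. This gives $Z=1/(\bar\sigma_{i^*}\sqrt{\lambda\psi(c_{i^*})})$, hence $n^*=\bar\sigma_{i^*}/\sqrt{\lambda\psi(c_{i^*})}$, item 2.
\end{enumerate}
Next I check that this pointwise rule satisfies the monotonicity condition. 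Raising $c_i$ weakly raises $\psi(c_i)$ by \Cref{ass:virtual_costs_monotonicity}. That can only make $i$ lose the argmin, and if $i$ still wins, its quantity $n^*$ decreases. So $n_i$ is pointwise non-increasing, hence $N_i$ is too. The payments are then defined by the envelope formula, which gives BIC and IR; ties are measure zero.

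For the rounded allocation $\lfloor n^*\rfloor$, monotonicity still holds, since the floor of a non-increasing function is non-increasing. The same Myerson payments therefore keep BIC and IR exactly, and the buyer's cost is again the virtual-cost objective evaluated at $\lfloor n^*\rfloor$. The relaxed optimum is a lower bound on the cost of any BIC mechanism, so it suffices to compare pointwise. Write $g(n)=a/n+bn$ with $a=\bar\sigma_{i^*}^2$ and $b=\lambda\psi(c_{i^*})$, minimized at $n^*=\sqrt{a/b}$ with value $2\sqrt{ab}$. For $n\in[n^*-1,n^*]$, the payment part satisfies $bn\le bn^*$. The variance part satisfies $a/n\le a/(n^*-1)=(a/n^*)\cdot n^*/(n^*-1)$. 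Hence
\[
g(\lfloor n^*\rfloor)\le g(n^*)\cdot\frac{n^*}{n^*-1} = g(n^*)\Bigl(1+\frac{1}{n^*-1}\Bigr).
\]
This yields the stated approximation factor in item 4, read as a multiplicative loss on buyer cost (utility is negative cost). Taking expectations preserves the ratio wherever $n^*>1$.

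The main obstacle is this last step. The factor depends on $n^*$, which varies with $\bm{c}$, and the bound degenerates when $n^*\le 1$. I would state it with $n^*$ taken as a uniform lower bound, using $c\le c_{\max}$ and $\bar\sigma^2\ge\sigma_L^2$ so that $n^*\ge \sigma_L/\sqrt{\lambda\psi(c_{\max})}$, and assume $\lambda$ is small enough that this exceeds $1$. A secondary check is that restricting to pointwise minimization is valid, i.e., that the relaxed pointwise optimum is itself monotone; I verified this above.
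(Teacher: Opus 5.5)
Your proposal is correct and follows essentially the same route as the paper. The steps match: Myerson's characterization with the virtual-cost payment identity, then pointwise minimization of the relaxed objective yielding single-sourcing, then the monotonicity check under the regularity assumption, then rounding down with the same Myerson payments. You fix total precision $Z$ and minimize payment where the paper fixes total payment $B$ and maximizes precision, which is the same linear argument. Your treatment of item 4 is somewhat more careful than the paper's: you bound the whole cost by $(1+1/(n^*-1))$ times the relaxed optimum, and you make explicit that a uniform lower bound on $n^*$ (e.g.\ $\sigma_L/\sqrt{\lambda\psi(c_{\max})}>1$) is needed before taking expectations.
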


The remainder of this section is devoted to proving this result. Before doing so, however, we make two remarks. First, as the proof will establish, the optimal mechanism corresponding to the aforementioned relaxed problem is \textit{single-sourced}, meaning that the buyer purchases all samples (i.e., $n^*$ samples) from a single data seller, denoted by seller $i^*$. Our rounded allocation preserves this property as well.

Second, it is straightforward to see that, under the assumption 
\begin{equation} \label{eqn:approx_condition}
\sigma_L \geq \alpha \sqrt{\lambda \psi(c_{\max})},    
\end{equation}
for some positive constant $\alpha \geq 3$, we can ensure that $n^* \geq \alpha$, which in turn guarantees a $(\alpha-2)/(\alpha-1)$-approximation factor. If we are satisfied with high-probability lower bounds on $n^*$, it is straightforward to see that $\psi(c_{\max})$ can be replaced in this condition by a smaller constant, since the winning data seller $i^*$ is likely to have a low cost. While we do not formalize this point here, we emphasize that assumptions of this form are generally unavoidable; otherwise, if $\lambda$ or the costs are too high, the cost of purchasing samples may induce the data buyer to purchase no samples at all. We next present the proof.

\begin{proof}[Proof of \Cref{prop:mechanism_soln}]

    Let us first find the optimal choice of weights $\bm{w}$ to produce the least variance estimator given a choice of $\bm{n}$ and $\bm{t}$. This will allow us to rewrite the mechanism design problem in terms of optimizing over $\bm{n}$ and $\bm{t}$. 

    \begin{lemma}[Optimal weights to minimize variance]\label{lem:opt_weights}
        Given the sequence of samples, purchases, payments, and means of posteriors $\bm{n}, \bm{t}, \bar{\bm{{\sigma}}}$, the vector of weights maximizing buyer utility $U_b$ is $\bm{w}^*$ with
        \[
            w_i^* = \frac{n_i/\bar{\sigma_i}^2}{\sum_{i\in[K]} n_i/\bar{\sigma_i}^2}.
        \]
    \end{lemma}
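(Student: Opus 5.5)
Since the payments $\bm{t}$ enter $U_b$ only through the term $-\lambda\sum_i t_i$, which does not involve $\bm{w}$, maximizing $U_b$ over $\bm{w}$ for fixed $\bm{n},\bm{t},\bar{\bm{\sigma}}$ is the same as minimizing $\text{Var}(\bm{n},\bm{w};\bm{\pi}) = \sum_i w_i^2 \bar{\sigma}_i^2/n_i$ over the simplex $\{w_i\ge 0,\ \sum_i w_i = 1\}$. I will solve this quadratic program directly. First, dispose of sellers with $n_i = 0$: their sample mean is undefined, so any positive weight on them is meaningless (formally gives infinite variance). Set $w_i = 0$ for these sellers, which matches the claimed formula since its numerator $n_i/\bar{\sigma}_i^2$ vanishes, and restrict attention to the active set $A = \{i : n_i > 0\}$. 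Write $a_i := \bar{\sigma}_i^2/n_i > 0$ for $i \in A$.

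\textbf{Main step.} Minimize $\sum_{i\in A} a_i w_i^2$ subject to $\sum_{i\in A} w_i = 1$. The cleanest route is Cauchy--Schwarz:
\[
1 = \Bigl(\sum_{i\in A} w_i\Bigr)^2 = \Bigl(\sum_{i\in A} \sqrt{a_i}\,w_i \cdot \tfrac{1}{\sqrt{a_i}}\Bigr)^2 \le \Bigl(\sum_{i\in A} a_i w_i^2\Bigr)\Bigl(\sum_{i\in A} \tfrac{1}{a_i}\Bigr).
\]
Hence $\text{Var} \ge \bigl(\sum_{i\in A} n_i/\bar{\sigma}_i^2\bigr)^{-1}$. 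Equality holds exactly when $\sqrt{a_i}\,w_i \propto 1/\sqrt{a_i}$, that is, when $w_i \propto 1/a_i = n_i/\bar{\sigma}_i^2$. Normalizing to sum to one gives the stated $w_i^*$.

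These weights are automatically nonnegative, so the constraint $w_i \ge 0$ is never binding. Equivalently, one can set up the Lagrangian; strict convexity of the objective (since every $a_i > 0$) gives uniqueness of the minimizer.

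I will also record the value $\text{Var}(\bm{n},\bm{w}^*;\bm{\pi}) = 1/\sum_i n_i/\bar{\sigma}_i^2$. This is the reduced form that lets the mechanism design problem be rewritten over $(\bm{n},\bm{t})$ alone.

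\textbf{Main obstacle.} There is essentially none beyond handling the degenerate case where every $n_i = 0$. In that case no estimator exists, and the variance should be interpreted as $+\infty$ (or the weights are irrelevant), so I will state the lemma for the case where some $n_i > 0$.
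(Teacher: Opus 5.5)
Your proof is correct and follows the paper's argument essentially verbatim: the same Cauchy--Schwarz factorization with $a_i = \bar{\sigma}_i^2/n_i$, giving the lower bound $\bigl(\sum_i n_i/\bar{\sigma}_i^2\bigr)^{-1}$, which is attained by $w_i \propto n_i/\bar{\sigma}_i^2$. Your separate treatment of sellers with $n_i = 0$ and your uniqueness remark via strict convexity are welcome refinements but do not change the route; the first matters because the paper's division by $\sqrt{n_i}$ silently skips that case, even though the single-sourcing mechanism makes $n_i = 0$ typical.
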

    \begin{proof}[Proof of \Cref{lem:opt_weights}]
        Utility function $U_b$ depends on $\bm{w}$ only through $\text{Var}(\bm{n}, \bm{w}; \bm{\pi})  = \sum_{i=1}^K w_i^2 \frac{\bar{\sigma}_i^2}{n_i}$. Hence the optimal choice of $\bm{w}$, given $\bm{n}, \bm{t}, \bar{\bm{{\sigma}}}$ minimizes $\sum_{i=1}^K w_i^2 \frac{\bar{\sigma}_i^2}{n_i}$ subject to the constraints that $\bm{w} \ge 0$ and $\sum_{i=1}^K w_i = 1$.

        By the Cauchy-Schwarz inequality, 
        \[1 = \left ( \sum_{i=1}^K w_i \right )^2 = \left (\sum_{i=1}^K \frac {\bar{\sigma}_i} {\sqrt{n_i}} w_i \cdot \frac  {\sqrt{n_i}} {\bar{\sigma}_i} \right )^2 \le \left ( \sum_{i=1}^K \frac {\bar{\sigma}_i^2} {{n_i}} w_i^2 \right ) \left ( \sum_{i=1}^K \frac  {{n_i}} {\bar{\sigma}_i^2} \right ) \]
        \[\Longrightarrow \sum_{i=1}^K \frac {\bar{\sigma}_i^2} {{n_i}} w_i^2 \ge \frac 1 {\sum_{i=1}^K \frac  {{n_i}} {\bar{\sigma}_i^2}}.\]

        The ${\bm{w}^*}$ with $w_i^* = \frac{n_i/\bar{\sigma_i}^2}{\sum_{i\in[K]} n_i/\bar{\sigma_i}^2}$  satisfies this lower bound with equality and hence is optimal.
    \end{proof}

Each seller's utility, Bayesian incentive compatibility (IC) and individual rationality (IR) have the same form as in a standard procurement auction. The utility of seller $i$ with cost $c_i$ when reported costs are $\Tilde{\bm{c}}$ and purchasing sample size and payment rules are $\bm{n}$ and $\bm{t}$ is $U_i(\Tilde{\bm{c}}; \bm{n}, \bm{t}; c_i) = t_i(\Tilde{\bm{c}}) - c_i \bm{n}_i(\Tilde{\bm{c}})$ . The Bayesian IC and IR constraints are in terms of the expected utility $\bar{U}_i(c_i, c^{'}_i, \bm{n}, \bm{t}) = \mathbb{E}_{\bm{c}_{-i}}[U_i(c_i, c^{'}_i, \bm{c}_{-i}; \bm{n}((c^{'}_i, \bm{c}_{-i})), \bm{t}((c^{'}_i, \bm{c}_{-i})))]$, which is the expected utility when seller $i$'s realized cost is $c_i$, taking an expectation over other sellers' costs, when seller $i$ reports cost $c^{'}_i$ and all other sellers truthfully report costs. Let us call this the \emph{cost-interim utility} of the seller.   
    We can write the cost-interim utility as $\bar{t}(c^{'}_i, c_i) - c_i \bar{n}(c^{'}_i, c_i)$ where $\bar{n}_i(c^{'}_i) = \mathbb{E}_{\bm{c}_{-i}}[n_i(c^{'}_i, \bm{c}_{-i})]$ and $\bar{t}_i(c^{'}_i) = \mathbb{E}_{\bm{c}_{-i}}[t_i(c^{'}_i, \bm{c}_{-i})]$ are the expected number of samples and payment received respectively by seller $i$ when reporting cost $c^{'}_i$. Again, the expectation is over the costs of the other sellers. Let us call $\bar{n}_i(c^{'}_i)$ the \emph{cost-interim sample-purchasing rule} and $\bar{t}_i(c^{'}_i)$ the \emph{cost-interim payment rule}.

    By Myerson's lemma~\citep{myerson1981optimal}, a payment and sample purchasing rule is Bayesian Incentive Compatible (BIC) if and only if the following properties are satisfied:

    \begin{enumerate}
        \item Monotonicity: The cost-interim sample purchasing rule is non-increasing in ${c_i}'$. That is, more expected samples must not be purchased if a seller reports a higher cost. 

        \item Myerson payment: The cost-interim payment satisfies 
        \[\bar{t}_i(c^{'}_i) = c^{'}_i \bar{n}_i(c^{'}_i) + U_i(c_{\text{max}}) + \int_{c^{'}_i}^{c_{\text{max}}} \bar{n}_i(t) dt.\]

        The cost-interim IR constraint can be written as $U_i(c_{\text{max}}) \ge 0$. To design the BIC mechanism with the least payment made, we can set $U_i(c_{\text{max}}) = 0$.
    \end{enumerate}
 As a consequence of the Myerson payment rule, we can write the expectation over each seller's cost $c_i$ of the cost-interim payment in the following way. Note that this expectation is the interim payment---which is the expected payment the buyer provides to a seller, taking an expectation over the realization of all sellers' costs. We have:
\begin{align*}
     \mathbb{E}_{\bm{c}} [t_i(\bm{c})] &= \mathbb{E}_{c_i} [\bar{t}_i(c_i)] \\
     &= \int_{c_\text{min}}^{{c_\text{max}}} c \bar{n}_i(c) f(c) dc + \int_{c_\text{min}}^{{c_\text{max}}} \left ( \int_{c}^{{c_\text{max}}} \bar{n}_i(t) dt \right ) f(c) dc \\
     &= \int_{c_\text{min}}^{{c_\text{max}}} c \bar{n}_i(c) f(c) dc +   \int_{c_\text{min}}^{{c_\text{max}}} \bar{n}_i(t) \left(\int_{{c_\text{min}}}^{c} f(c) dc \right ) dt \\
     &= \int_{c_\text{min}}^{{c_\text{max}}} \bar{n}_i(c) \left ( c + \frac {F(c)}{f(c)}\right ) f(c) dc \\
     &= \mathbb{E}_{c_i} [\bar{n}_i(c_i) \psi(c_i)].
 \end{align*}
Hence, the expected payment made in the optimal BIC mechanism is the product of the expected cost-interim purchased sample size and the virtual cost.  

With the Myerson payment, we can write the expected payment in the objective of the buyer's mechanism design problem $\mathbb{E}_{\bm{c}}[\lambda \sum_{i=1}^K t_i]$ as $\mathbb{E}_{\bm{c}}[\lambda \sum_{i=1}^K \bar{n}_i(c_i) \psi(c_i)]$, which is equal to $\mathbb{E}_{\bm{c}}[\lambda \sum_{i=1}^K n_i(\bm{c}) \psi(c_i)]$. We can then write the buyer's optimal BIC mechanism problem as the following optimization problem over sample-purchasing rules subject to the monotonicity constraint~\eqref{eq:monotonicity_myerson}:
\begin{align}
    \min_{\bm{n}} \quad & \mathbb{E}_{\bm{c}} \left [\frac{1}{\sum_{i=1}^K n_i(\bm{c}) / \bar{\sigma}_i^2} + \lambda \sum_{i=1}^K n_i(\bm{c}) \psi(c_i) \right ] \label{eq:sample_purchasing_problem}
    \\
    \text{subject to} \quad & \mathbb{E}_{\bm{c}_{-i}}[{\bm{n}}_i(\bm{c}_{-i},c_i)] \ge \mathbb{E}_{\bm{c}_{-i}}[{\bm{n}}_i(\bm{c}_{-i},c_i')]
    \quad \text{for all } i \in [K],\ c_i \le c_i'
    \label{eq:monotonicity_myerson}
\end{align}

Let us first minimize the objective~\eqref{eq:sample_purchasing_problem}, dropping the monotonicity constraint in~\eqref{eq:monotonicity_myerson}. We will later show that the solution satisfies the monotonicity constraint under \Cref{ass:virtual_costs_monotonicity}. Let us minimize the objective in \eqref{eq:sample_purchasing_problem}) by minimizing it for every realized $\bm{c}$. Let $\sum_{i=1}^K n_i \psi(c_i) = B$. Then, 
\[\sum_{i=1}^K n_i / \bar{\sigma}_i^2 = \sum_{i=1}^K \frac 1 {\bar{\sigma}_i^2 \psi(c_i)}n_i \psi(c_i) \le \max_{i \in [K]} \frac B {\bar{\sigma}_i^2 \psi(c_i)}.\]

For any fixed value of payments, the smallest value of the variance is achieved by only purchasing samples from sellers with the least $\bar{\sigma}_i^2 \psi(c_i)$. Let $i^*$ be such a seller. So the optimization problem becomes optimizing over the value of payments $B$: 

\[\min_{B \ge 0} \frac {\bar{\sigma}_{i^*}^2 \psi(c_{i^*})} B + \lambda B.\]

The optimal value is $B = \bar{\sigma}_{i^*} \sqrt{\psi(c_{i^*})/\lambda}$. We will consider the optimal mechanism that always chooses to purchase from one seller, breaking any ties by selecting uniformly at random. The number of samples purchased from the selected seller $i^*$ is then $n_{i^*}$, chosen so that $n_{i^*} \psi(c_{i^*}) = \bar{\sigma}_{i^*} \sqrt{\psi(c_{i^*})/\lambda}$. In other words, $n_{i^*} = \bar{\sigma}_i / \sqrt{\lambda \psi(c_{i^*})}$.

Now let us show that the minimizer without constraint~\eqref{eq:monotonicity_myerson} satisfies the constraint nonetheless. We will argue that the optimal sample purchasing rule is monotone in the reported cost as long as \Cref{ass:virtual_costs_monotonicity} holds (virtual costs are non-decreasing functions of costs). This implies the monotonicity constraint \eqref{eq:monotonicity_myerson} is satisfied.  Reporting a higher cost cannot increase chance of selection since the seller with the least $\bar{\sigma}_i^2 \psi(c_i)$ is selected. Additionally, the number of samples purchased from the selected seller is a non-increasing function of the virtual cost of the reported cost, and hence of the reported cost as well. 

To go from the solution of the relaxed mechanism design problem to the actual mechanism design problem where the number of samples purchased is an integer, we will round the allocation obtained in the relaxed problem down to the nearest integer. 

The rounded-down sample purchase scheme remains monotone since it is obtained by applying the monotone floor function to the monotone relaxed allocation rule. Combined with the same Myerson payment rule, the resulting mechanism remains BIC. Due to the rounding, the resulting sample purchasing scheme might not be optimal, but we will show that it is approximately optimal. 

Let $n^* = \bar{\sigma}_{i^*} / \sqrt{\lambda \psi(c_{i^*})}$ be the random variable indicating the number of samples purchased under the relaxation. Recall that the buyer utility is given by 
\begin{equation*}
-\frac{\bar{\sigma}_{i^*}^2}{n^*} - \lambda n^* \psi(c_{i^*}).    
\end{equation*}
Note that rounding down the number of samples $n^*$ increases the second term (as its absolute value decreases). Thus, the only source of suboptimality is that, due to the rounding, the buyer's variance increases. The rounding decreases the number of purchased samples by at most one.  Therefore, the buyer's utility can decrease by at most 
\[\bar{\sigma}_{i^*}^2 \left (\frac 1 {n^*-1} - \frac 1 {n^*} \right ) = \frac{\bar{\sigma}_{i^*}^2}{n^*(n^*-1)}.\]
This is a multiplicative factor of $1/(n^*-1)$ compared to the non-rounded utility. 
\end{proof}

\section{Free-Sample Equilibria}
In the previous section, we showed that the optimal mechanism, under the relaxation that allows the number of samples to be real-valued, is a single-sourcing mechanism, in which the data buyer purchases data from only one data seller. We also showed that an integral approximation of this mechanism yields an approximately optimal mechanism.

In general, finding the optimal mechanism over the space of all non-negative integer allocations is an integer program equivalent to the optimization problem in \eqref{eq:sample_purchasing_problem}, and it may not admit a closed-form solution. For this reason, we work with an approximate equilibrium in this paper.

Alternatively, one could adopt exact equilibrium as the solution concept, but restrict the buyer’s action space so that the mechanism in \Cref{prop:mechanism_soln} is selected. For example, one could assume that it is common knowledge that the buyer employs the approximation approach of solving the relaxed problem and rounding down the resulting sample sizes (the $n_i$’s), or that the buyer commits ex ante to this particular mechanism structure.

Throughout this section, we make the following assumption:
\begin{assumption}[Bounds on the cost density]\label{ass:cost_distribution_bounds}
    The cost density $f$ satisfies $\ell \le f(c) \le L$ for every $c \in [c_{\text{min}}, c_{\text{max}}]$.
\end{assumption}

\subsection{The uninformative equilibrium}
When sellers share a large number of samples at equilibrium, the buyer is more informed about the quality of data that they purchase. On the other hand, when fewer samples are shared at equilibrium, the buyer has a higher degree of uncertainty about the data quality purchased. Our first result shows that for some problem parameters, the equilibrium can be maximally uninformative. That is, there is an equilibrium where no seller shares any free samples. 

\begin{theorem}[Uninformative Equilibrium]\label{thm:uninformative_equilibrium}
Suppose that \Cref{ass:virtual_costs_monotonicity} and \Cref{ass:cost_distribution_bounds} hold and that $\sigma_L > \sqrt{\lambda \psi(c_{\text{max}})}$. Then, for any number of sellers $K$ and any $\zeta>0$, there exist values of $\sigma_L, \sigma_H, \mu$ such that there is a $\zeta$-approximate subgame perfect Nash equilibrium where all sellers share zero free samples. 
\end{theorem}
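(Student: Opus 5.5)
Since the statement lets us pick $\sigma_L,\sigma_H,\mu$, I would choose a regime where the two risks from the introduction are easy to make precise. The plan is to take $\mu$ close to $1$ and $\sigma_H$ much larger than $\sigma_L$. Then the prior mean $\bar\sigma_0^2=\mu\sigma_L^2+(1-\mu)\sigma_H^2$ sits above $\sigma_L^2$ by a controlled amount, and a single high-variance draw is very costly for the deviator. I fix the continuation mechanism of \Cref{prop:mechanism_soln}. By item 4 and the assumption $\sigma_L>\sqrt{\lambda\psi(c_{\max})}$, we have $n^*>1$. I would take $\sigma_L$ large enough, relative to $\sqrt{\lambda\psi(c_{\max})}$, that $n^*\ge \alpha$ with $1/(\alpha-1)\le\zeta$, as in \eqref{eqn:approx_condition}. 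This makes part (i) of the equilibrium definition hold. It remains to check that, when every other seller has $m_j=0$, seller $i$ gains nothing by deviating to any $m_i\in\{2,\dots,M\}$.

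\textbf{Step 1: write the payoff.} With $m_{-i}=0$, every rival has belief mean $\bar\sigma_0^2$. Using item 3 of \Cref{prop:mechanism_soln}, seller $i$'s expected utility is the information rent
\[
\mathbb{E}\big[n_i(\psi(c_i)-c_i)\big]=\mathbb{E}\big[n_i\,F(c_i)/f(c_i)\big].
\]
Seller $i$ wins iff $\bar\sigma_i^2\psi(c_i)<\bar\sigma_0^2\min_{j\ne i}\psi(c_j)$. When it wins, it sells $\lfloor\bar\sigma_i/\sqrt{\lambda\psi(c_i)}\rfloor$ samples. So
\[
\bar U_i(m_i,0)=\mathbb{E}_{\bar\sigma_i^2}\big[G(\bar\sigma_i^2)\big],
\]
where $G(s)$ is the expected rent as a function of the realized posterior mean $s$, and $\mathbb{E}[\bar\sigma_i^2]=\bar\sigma_0^2$ by the martingale property. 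The deviation is unprofitable exactly when $\mathbb{E}[G(\bar\sigma_i^2)]\le G(\bar\sigma_0^2)$. This holds, for instance, if $G$ is concave on the relevant range, or if the loss in the bad-signal event outweighs the gain in the good-signal event.

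\textbf{Step 2: bound the two events.} With free samples, the posterior mean lies close to $\sigma_H^2$ with probability roughly $1-\mu$ (true high type), and close to $\sigma_L^2$ otherwise. Once $\sigma_H^2\,\psi(c_{\min})>\bar\sigma_0^2\,\psi(c_{\max})$, the high-signal outcome gives winning probability essentially zero. I would make this quantitative with chi-square tail bounds on the sample variance for $m\ge2$, choosing $\sigma_H/\sigma_L$ large so the likelihood ratio is sharp.

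The low-signal gain comes from two effects: the winning region in $c_i$ grows, but the purchased quantity shrinks by the factor $\sigma_L/\bar\sigma_0$. I bound this gain above using Assumption~\ref{ass:cost_distribution_bounds}. The rent density is at most $(c_{\max}-c_{\min})L/\ell$, and by the density bounds the probability that the win region expands is $O(\bar\sigma_0^2/\sigma_L^2-1)=O\big((1-\mu)\sigma_H^2/\sigma_L^2\big)$. The loss from the high-signal event is of order $(1-\mu)\,G(\bar\sigma_0^2)$. Near $\mu\to1$, both are first order in $(1-\mu)$. Comparing the coefficients, the gain coefficient is roughly $\sigma_H^2/\sigma_L^2$ times a win-region derivative, and the loss coefficient is $G(\sigma_L^2)$ minus the quantity-shrink effect.

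\textbf{Main obstacle.} The hard part is making this coefficient comparison come out in favor of zero disclosure for every $K$. For $K=1$ the inequality is immediate: seller $i$ always wins unless $\psi(c_i)$ constraints bind. The quantity $\lfloor \bar\sigma_i/\sqrt{\lambda\psi}\rfloor$ is concave in $\bar\sigma_i^2$ via the square root, so Jensen makes revealing weakly harmful, up to floor effects. For general $K$, I would first choose the ratio $\sigma_H/\sigma_L$ and the cost-dependent quantities, and only then send $\mu\to1$. The aim is for the expected increase in win probability, which is bounded by $K$-independent density constants times $\bar\sigma_0^2/\sigma_L^2-1$, to be dominated by the square-root concavity loss. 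If that comparison fails, the fallback is to use the floor: pick $\sigma_L,\sigma_H$ so that $\lfloor\sigma_L/\sqrt{\lambda\psi(c)}\rfloor<\lfloor\bar\sigma_0/\sqrt{\lambda\psi(c)}\rfloor$ on most of the cost support. That makes the low-signal gain strictly negative per win, while the win-probability increase stays small as $\mu\to1$.
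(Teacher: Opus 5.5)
Your proposal never establishes the deviation inequality, and the regime you choose is one where that inequality fails for natural instances. Write $\eta=1-\mu$ and $R=\sigma_H/\sigma_L$. Ignoring the floor, $G(s)=\bar\sigma_0\,g(s/\bar\sigma_0^2)$, where $g(r)=\sqrt r\,\Phi(r)$ and $\Phi(r)=\mathbb{E}_c\bigl[\tfrac{F(c)}{f(c)\sqrt{\lambda\psi(c)}}\Pr_{c'}(\psi(c')>r\psi(c))^{K-1}\bigr]$. In your Step 2 regime (sharp identification, $\eta R^2$ small so $\bar\sigma_0^2\approx\sigma_L^2$), a disclosing low type moves to $r\approx 1-\eta(R^2-1)$. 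A disclosing high type moves to $r\approx R^2>\rho:=\psi(c_{\max})/\psi(c_{\min})$ and never wins. The net effect of disclosure is therefore approximately
\[
\eta\,\bar\sigma_0\bigl(-g'(1)\,(R^2-1)-g(1)\bigr).
\]
This is positive for large $R$ whenever $g'(1)=\tfrac12\Phi(1)+\Phi'(1)<0$, i.e. whenever the win-probability effect beats the square-root shrink. The gain coefficient carries a factor $R^2$ that the loss coefficient lacks, and you need $R$ large anyway, both for a sharp likelihood ratio and for $R^2>\rho$. The condition $g'(1)<0$ is typical for large $K$, where the win probability is steep in $r$. It already holds for $K=2$ with costs uniform on $[1,2]$, where $\Phi(1)\approx 0.12/\sqrt\lambda$ and $\Phi'(1)\approx-0.38/\sqrt\lambda$.

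So the coefficient comparison you flag as the main obstacle is not a technicality: disclosure is profitable there. Your Jensen argument settles only $K=1$, which is a fair remark, since that case has no rival to exclude the high type. The floor fallback does not help either. As $\bar\sigma_0\to\sigma_L$, the costs on which $\lfloor\sigma_L/\sqrt{\lambda\psi(c)}\rfloor<\lfloor\bar\sigma_0/\sqrt{\lambda\psi(c)}\rfloor$ have probability of order $\eta R^2 n^*$, not most of the support. That effect is of the same order as the square-root shrink already inside $g'(1)$.

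The missing idea is the opposite order of limits, which is what the paper uses. Fix $\mu$, only large enough that an identified high type is excluded: the paper takes $\mu>2\rho/(1+2\rho)$ and $\Delta_H=\mu/2$, so that $((\mu-\Delta_H)\sigma_L^2+(1-\mu+\Delta_H)\sigma_H^2)\psi(c_{\min})>\sigma_0^2\psi(c_{\max})$ for large $R$. Then send $R\to\infty$, so that $\sigma_0=\sqrt{\mu\sigma_L^2+(1-\mu)\sigma_H^2}\approx\sqrt{1-\mu}\,\sigma_H\gg\sigma_L$. In this regime crude bounds suffice:
\begin{itemize}
\item Under $m=0$, each seller wins with probability $1/K$ and earns rent at least a constant times $\sigma_0/(K(K+1)L^2)$ (Lemma~\ref{lem:utility_symmetric_strategy}).
\item After disclosure, work on the event of Lemma~\ref{lem:upper_bound_belief_shift} with $1-\mu-\Delta_L=\delta=\epsilon$. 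The low type's payment is at most about $\sqrt\epsilon\,\sigma_H\sqrt{\psi(c_{\max})/\lambda}$, even with its win probability bounded by $1$.
\item The high type is never selected.
\item The failure event costs at most $\epsilon\,\sigma_H\sqrt{\psi(c_{\max})/\lambda}$.
\end{itemize}
Choosing $\epsilon$ small and then $R$ large makes $(\mu\sqrt\epsilon+\epsilon)\sigma_H$ fall below $\sqrt{1-\mu}\,\sigma_H$ times the $K$-dependent constant. The point is that disclosure collapses the low type's purchased quantity by a factor of order $\sigma_L/\sigma_0\to 0$. This swamps any gain in winning probability as well as the $1/K^2$ factor, so no delicate first-order comparison is needed.
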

Note that there are two conflicting forces shaping data sellers’ incentives to share data. On the one hand, competition among sellers may push each seller to reveal their quality; otherwise, they may lose even if they possess high-quality (low-variance) samples. On the other hand, revealing information has two potential drawbacks. First, if a seller’s sample variance is low, they may indeed win, but the buyer will purchase fewer samples from them, as suggested by \Cref{prop:mechanism_soln}, since the buyer can achieve the desired precision with a smaller number of samples when quality is known to be high. Second, if the realized sample variance is high, the seller may regret revealing this information.

Interestingly, as the proof shows, for any number of sellers (that is, for any level of competition) there exist parameter regimes in which the second force dominates, to the point that sellers may prefer not to share any samples for free. In particular, as the proof shows, this outcome arises when the ratio $\sigma_H / \sigma_L$ is sufficiently large, which amplifies the disincentive to share free samples.
\subsubsection{Proof of \Cref{thm:uninformative_equilibrium}}
We claim that the strategy profile in which no seller shares any free samples in the first stage, and the buyer then runs the mechanism in \Cref{prop:mechanism_soln}, constitutes an approximate subgame perfect equilibrium. In the proof, we focus primarily on sellers’ incentives in sharing free samples, and at the end choose parameters so that the mechanism in \Cref{prop:mechanism_soln} achieves the desired approximation factor $\zeta$.

For the sellers, we will first derive a lower bound on the utility before deviation and then construct a distribution over seller variances that results in a strictly lower utility after unilateral deviation. 
    
    \paragraph{Lower bound on utility before deviation.} This lower bound holds for more general symmetric data-sharing strategy profiles where every seller shares the same number of samples, regardless of the number of samples shared and shows a dependence of $\Omega(1/K^2)$ on the number of sellers $K$. In the special case of no samples shared, the lower bound scales with $\sigma_0$, where $\sigma_0 = \sqrt{\mu\sigma_L^2 + (1-\mu)\sigma_H^2}$ is the standard deviation according to the prior.
    
    \begin{lemma}[Utility in symmetric data-sharing strategy profiles]\label{lem:utility_symmetric_strategy}
        Suppose every seller shares the same number of samples $m \in \{0, \ldots, M\}$.  Under the assumption that the pdf of the cost generating distribution $f$ satisfies $\ell \le f(c) \le L$ for every $c \in [c_{\text{min}}, c_{\text{max}}]$, the probability of the buyer buying samples from each seller $i$ is $1/K$. The expected utility of each seller is at least $\Bar{U}_{lb} := \frac{1} {K(K+1)L^2}$. Furthermore, when the number of samples shared $m=0$, the expected utility of each seller is at least $\Bar{U}_{lb}(0) := \sqrt{\frac{ {\mu \sigma_L^2 + (1-\mu) \sigma_H^2}} { {\lambda \psi(c_{\text{max}})}}} \Bar{U}_{lb}$.
    \end{lemma}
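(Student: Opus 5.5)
The plan is to get the probability claim from symmetry, and the utility bound from Myerson's representation of interim utility, with a volume-of-simplex style estimate supplying the $1/(K(K+1))$ factor.

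\emph{Selection probability.} When every seller shares $m$ samples, the pairs $(\bar{\sigma}_i^2, c_i)$ are i.i.d. across sellers. This holds because qualities, costs and free samples are independent and identically distributed, and the belief map $\pi_i$ is the same function for every seller. The selection rule of \Cref{prop:mechanism_soln} picks $\operatorname{argmin}_i \bar{\sigma}_i^2\psi(c_i)$ and breaks ties uniformly at random, so it is symmetric in the sellers. Exchangeability therefore gives each seller winning probability exactly $1/K$.

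\emph{Interim utility via Myerson.} Under the Myerson payment with $U_i(c_{\max})=0$, a seller with cost $c$ has cost-interim utility $\int_c^{c_{\max}} \bar n_i(s)\,ds$. Integrating against $f$ and swapping the order of integration, as in the proof of \Cref{prop:mechanism_soln}, the ex-ante utility is
\begin{equation*}
\mathbb{E}_{c_i}\!\left[\bar n_i(c_i)\,\frac{F(c_i)}{f(c_i)}\right] = \mathbb{E}_{\bm c}\!\left[n_i(\bm c)\,\frac{F(c_i)}{f(c_i)}\right].
\end{equation*}
Whenever seller $i$ is selected, it receives $\lfloor n^*\rfloor \ge 1$ samples, using $\sigma_L>\sqrt{\lambda\psi(c_{\max})}$ so that $n^*\ge 1$. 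By \Cref{ass:cost_distribution_bounds}, $F(c)/f(c)\ge F(c)/L$. The utility is therefore at least $\frac1L\mathbb{E}[F(c_i)\mathbf 1\{i \text{ wins}\}]$.

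\emph{Lower-bounding $\mathbb{E}[F(c_i)\mathbf{1}\{i\text{ wins}\}]$.} I will first condition on the belief profile. Seller $i$ wins if $\bar\sigma_i^2\psi(c_i)\le \bar\sigma_j^2\psi(c_j)$ for all $j$. When beliefs are equal (the $m=0$ case), this reduces to $c_i$ being the minimum cost, and $\mathbb{E}[F(c_i)\mathbf 1\{c_i\text{ min}\}]=\mathbb{E}[U(1-U)^{K-1}]=\frac{1}{K(K+1)}$, with $U=F(c_i)$ uniform. For general $m$ I need to handle unequal beliefs. I will try exchangeability together with the fact that $F(c_i)$ is positively aligned with winning. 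Concretely, I sort the sellers by the index $\bar\sigma_j^2\psi(c_j)$ and compare $F(c_i)$ on the winning event with its value at the order statistic. The density bounds let me convert cost gaps into probability gaps, which is where the second factor $1/L$ appears, giving $\frac{1}{K(K+1)L^2}$. This step is the main obstacle. Near-winning for a seller with larger $\bar\sigma^2$ forces $c_i$ to be small, so $F(c_i)$ is small. I would handle this by conditioning on the winner's index and using $\ell\le f\le L$ to bound the relevant conditional expectation below by that of a uniform-type minimum.

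\emph{The $m=0$ refinement.} With no samples, every $\bar\sigma_i^2=\sigma_0^2$. The winner then buys $\lfloor \sigma_0/\sqrt{\lambda\psi(c_{i})}\rfloor\ge$ a constant multiple of $\sigma_0/\sqrt{\lambda\psi(c_{\max})}$ samples, using $\psi(c_i)\le\psi(c_{\max})$ and monotonicity of $\psi$. Substituting this in place of the single sample in the bound above multiplies $\bar U_{lb}$ by $\sqrt{\sigma_0^2/(\lambda\psi(c_{\max}))}$. The floor costs at most a factor of 2, which can be absorbed into the constant or handled by the $1/L$ slack.
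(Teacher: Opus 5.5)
\textbf{Verdict: there is a genuine gap.} Your selection-probability argument, the Myerson reduction to $\mathbb{E}[n_i(\bm c)F(c_i)/f(c_i)]\ge \frac1L\,\mathbb{E}[F(c_i)\mathbf 1\{i \text{ wins}\}]$, and the $m=0$ computation $\mathbb{E}[U(1-U)^{K-1}]=\frac{1}{K(K+1)}$ are all fine. The gap is the general-$m$ step. You flag it as the main obstacle and then leave it as a sketch (sort by the index, condition on the winner's index, compare with a ``uniform-type minimum''), so no inequality is actually produced. The heuristic you start from also points the wrong way. $F(c_i)$ is \emph{negatively} associated with winning, since a low cost is what makes seller $i$ win. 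Conditioning on the winner runs straight into the difficulty you describe: a seller with a large $\bar\sigma_i^2$ wins only when $F(c_i)$ is tiny.

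The missing idea is to avoid analyzing the winning event at all. Since $F$ is non-decreasing, $F(c_i)\ge F(\min_j c_j)$, and $g:=F(\min_j c_j)$ is a symmetric function of the sellers' data. The pairs $(c_j,\bar\sigma_j^2)$ are exchangeable and the selection rule (with uniform tie-breaking) is permutation-equivariant, so $\mathbb{E}[\mathbf 1\{i\text{ wins}\}\,g]$ is the same for every $i$. These $K$ terms sum to $\mathbb{E}[g]$, so each equals $\frac1K\mathbb{E}[\min_j F(c_j)]=\frac{1}{K(K+1)}$, whatever the belief profile.

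This is exactly the paper's argument, with one difference. The paper uses $\frac{\ell}{L}(c_i-c_{\min})$ in place of $F(c_i)/L$, and then needs $\int(1-F(x))^K\,dx\ge\frac1L\int(1-F)^K f\,dx=\frac{1}{L(K+1)}$. That conversion is the only source of the paper's second factor $1/L$, so your claim that a second $1/L$ ``appears'' in your route is unfounded. Reaching $\frac{1}{K(K+1)L^2}$ from your inequality would require $\mathbb{E}[F(c_i)\mathbf 1\{i\text{ wins}\}]\ge\frac{1}{K(K+1)L}$. That cannot hold in general: rescaling all costs by $s$ leaves the left side unchanged (the selection rule and $F(c_i)$ are scale-invariant) but multiplies the right side by $s$.

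Completed this way, your route gives $\frac{N_0}{K(K+1)L}$. This dominates what the paper's proof actually derives, $\frac{\ell N_0}{K(K+1)L^2}$; the printed statement drops the $\ell$. For $m=0$, the paper simply ignores the floor in $N_0$. Your factor-$2$ loss is the honest accounting, but there is no general ``$1/L$ slack'' to absorb it, so the stated constant is not recovered exactly.
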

    \begin{proof}[Proof of \Cref{lem:utility_symmetric_strategy}]
       By symmetry, each seller has the same probability of winning. So the probability of the buyer buying samples from each seller $i$ is $1/K$. 

       From the Myerson payment from \Cref{prop:mechanism_soln}, the expected utility of each seller is:
       \begin{align*}
           \mathbb{E}_{\bm{c}}[U_i(\bm{c})] = \mathbb{E}_{\bm{c}}[n_i(\bm{c}) (\psi(c_i) - c_i)] = \mathbb{E}_{\bm{c}}[n_i(\bm{c}) {F(c_i)}/{f(c_i)}] \geq \mathbb{E}_{\bm{c}}[n_i(\bm{c}) (c_i - c_\text{min}) \ell / L] 
       \end{align*}
       {Let $N_0$ be a constant that lower bounds the number of samples purchased from a seller if selected. Using the inequality $c_i \ge \min(c_1, \ldots, c_K)$,}
       \begin{align}
           \mathbb{E}_{\bm{c}}[U_i(\bm{c})] &\ge \frac{\ell N_0}{L} \mathbb{E}_{\bm{c}}[\mathds{1}\{\text{buyer buys from } i\} (\min(c_1, \ldots, c_K) - c_{\text{min}})] \label{eqn:expected_utility_lower_bound}
       \end{align}

       From~\Cref{prop:mechanism_soln}, we know that $N_0$ is at least $\bar{\sigma}/\sqrt{\lambda \psi(c_{\text{max}})}$, where $\bar{\sigma}$ is the mean standard deviation according to the posterior. We know in general that $\bar{\sigma} \ge \sigma_L$. For this lower bound on $\bar{\sigma}$, our assumption that $\sigma_L \ge \sqrt{\lambda \psi(c_{\text{max}}))}$ implies that $N_0 \ge 1$. However, when no samples are shared, $\bar{\sigma}$ is always $\sqrt{\mu\sigma_L^2 + (1-\mu)\sigma_H^2} > \sigma_L$. So for $m \ge 1$, we will use $N_0 = 1$ and for $m=0$, we will use $N_0 = \sqrt{\frac{ {\mu \sigma_L^2 + (1-\mu) \sigma_H^2}} { {\lambda \psi(c_{\text{min}})}}}$.
       
       {Consider the lower-bound quantity from \Cref{eqn:expected_utility_lower_bound}. If we sum up the lower bound over all the sellers, we get the value $(\ell N_0/L)\mathbb{E}_{\bm{c}}[\min(c_1, \ldots, c_K)) - c_{\text{max}}]$. This is using the fact that summing $\mathbb{E}_{\bm{c}}[\mathds{1}\{\text{buyer buys from } i\}]$ over all $i \in [K]$ is 1. Additionally, this quantity is equal for every seller by symmetry. Therefore, we can write}
       \begin{align*}
           \mathbb{E}_{\mathbf{c}}[\mathds{1}\{\text{buyer buys from } i\} (\min(c_1, \ldots, c_K)) - c_{\text{min}}] &= \frac{1}{K} \mathbb{E}_{\bm{c}}[(\min(c_1, \ldots, c_K)) - c_{\text{min}}] \\
       \end{align*}

       Finally, we will bound $\mathbb{E}_{\bm{c}}[(\min(c_1, \ldots, c_K)) - c_{\text{min}}]$ to obtain the lower bound on the expected utility. We will again use the lower and upper bounds on cost densities. 
       \begin{align*}
           \mathbb{E}_{\bm{c}}[(\min(c_1, \ldots, c_K)) - c_{\text{min}}] &= \int_{c_\text{min}}^{c_\text{max}} \text{Pr}_{\bm{c}}[\min(c_1, \ldots, c_K) \ge x] dx. \\
           &= \int_{c_\text{min}}^{c_\text{max}} (1 - F(x))^K dx. \\
           &\ge \int_{c_\text{min}}^{c_\text{max}} \frac 1 L(1 - F(x))^K f(x) dx. \\
           &= \frac{1}{L(K+1)}. 
       \end{align*}

       Setting $N_0 = 1$ for $m \ge 1$ and $N_0 = \sqrt{\frac{ {\mu \sigma_L^2 + (1-\mu) \sigma_H^2}} { {\lambda \psi(c_{\text{max}})}}}$ for $m=0$, we get the lower bound on the expected utility in the lemma.
    \end{proof}

    \paragraph{Upper bound on utility after deviation.}
    We show that after deviating, the buyer's posterior belief that the seller's variance is $\highvar$ shifts: it decreases from $1 - \mu$ to below $1 - \mu - \Delta_L$ when $\lowvar$ is realized, and increases to above $1 - \mu + \Delta_H$ when $\highvar$ is realized. Here $\Delta_L, \Delta_H > 0$ are constants to be chosen later. We show that this shift occurs with high probability over the samples, and that this probability approaches one as $\sigma_H / \sigma_L \to \infty$.  
    
    This belief shift reduces the deviating seller's utility for both realized variance types, though for different reasons. When $\lowvar$ is realized, the shift toward $\lowvar$ reduces the utility upon selection (payment minus cost). When $\highvar$ is realized, the shift toward $\highvar$ reduces the probability of being selected.

    \begin{lemma}[Belief shift from sharing $m$ samples]\label{lem:upper_bound_belief_shift}
        For any $\Delta_L \in (0,\, 1-\mu)$, $\Delta_H \in (0,\, \mu)$, and $\delta \in (0,1)$, there exists $R_0 > 0$ depending on $\mu, \Delta_L, \Delta_H, \delta$, such that whenever $\highsigma/\lowsigma > R_0$ and any $m \ge 2$ free samples are shared, the posterior $\pi_H := \Pr(\sigma^2 = \highvar \mid S^2)$ satisfies:
        \begin{enumerate}[label=(\roman*)]
            \item $\Pr\bigl(\pi_H \le 1 - \mu - \Delta_L \;\big|\; \sigma^2 = \lowvar\bigr) \ge 1 - \delta$, and
            \item $\Pr\bigl(\pi_H \ge 1 - \mu + \Delta_H \;\big|\; \sigma^2 = \highvar\bigr) \ge 1 - \delta$.
        \end{enumerate}
    \end{lemma}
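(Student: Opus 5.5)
The plan is to write the posterior explicitly in terms of the sufficient statistic and show that, as the ratio $r := \sigma_H/\sigma_L$ grows, the likelihood ratio becomes extreme with high probability under either true type. Let $S^2$ be the sample variance of the $m\ge 2$ free samples and set $k = m-1 \ge 1$. Under variance $\sigma^2$, the quantity $kS^2/\sigma^2$ is $\chi^2_k$, so the density of $S^2$ at $s$ is proportional to $\sigma^{-k} s^{k/2-1}\exp(-ks/(2\sigma^2))$. Bayes' rule then gives
\begin{equation*}
\frac{\pi_H}{1-\pi_H} = \frac{1-\mu}{\mu}\, \Lambda(S^2), \qquad \Lambda(s) := r^{-k}\exp\!\Big(\frac{ks}{2}\big(\sigma_L^{-2}-\sigma_H^{-2}\big)\Big).
\end{equation*}
The target events are threshold events on $\Lambda$. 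The event $\pi_H \le 1-\mu-\Delta_L$ is equivalent to $\Lambda \le a$, where
\begin{equation*}
a = \frac{\mu(1-\mu-\Delta_L)}{(1-\mu)(\mu+\Delta_L)} < 1.
\end{equation*}
The event $\pi_H \ge 1-\mu+\Delta_H$ is equivalent to $\Lambda \ge b$, where
\begin{equation*}
b = \frac{\mu(1-\mu+\Delta_H)}{(1-\mu)(\mu-\Delta_H)} > 1.
\end{equation*}
Both $a>0$ and $b<\infty$ hold by the stated ranges of $\Delta_L,\Delta_H$, and both constants depend only on $\mu,\Delta_L,\Delta_H$.

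\textbf{Case (i).} Suppose the true variance is $\sigma_L^2$, and write $S^2 = \sigma_L^2 X/k$ with $X\sim\chi^2_k$. Then
\begin{equation*}
\Lambda = r^{-k}\exp\big(\tfrac{X}{2}(1-r^{-2})\big) \le r^{-k} e^{X/2},
\end{equation*}
so it suffices that $X \le 2k\log r + 2\log a$. By Chernoff bounds for $\chi^2_k$, there is a constant $C_\delta$, uniform in $k$, with $\Pr(X > k + C_\delta\sqrt{k} + C_\delta) \le \delta$ for all $k\ge 1$ (e.g.\ from Laurent--Massart). It is then enough to take $r$ so large that
\begin{equation*}
2k\log r + 2\log a \ge k + C_\delta\sqrt{k} + C_\delta \quad \text{for all } k\ge 1.
\end{equation*}
Since the left side grows linearly in $k$ with slope $2\log r$, this holds once $\log r \ge$ some constant depending on $a$ and $C_\delta$.

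\textbf{Case (ii).} Suppose the true variance is $\sigma_H^2$. Now $S^2 = \sigma_H^2 Y/k$ with $Y\sim\chi^2_k$, and
\begin{equation*}
\Lambda = r^{-k}\exp\big(\tfrac{Y}{2}(r^2-1)\big).
\end{equation*}
We need $Y(r^2-1) \ge 2k\log r + 2\log b$. By the lower-tail bound, $\Pr(Y < \epsilon k)$ can be made at most $\delta$ uniformly in $k\ge 1$ by choosing $\epsilon = \epsilon(\delta)$ small. For $k=1$ this holds since $\Pr(\chi^2_1<\epsilon)\approx\sqrt{2\epsilon/\pi}$, and the bound only improves for larger $k$ via $\Pr(\chi^2_k<\epsilon k)\le (\epsilon e^{1-\epsilon})^{k/2}$. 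On the complementary event, it suffices that
\begin{equation*}
\epsilon k(r^2-1) \ge 2k\log r + 2\log b \quad \text{for all } k\ge 1,
\end{equation*}
which holds for $r$ large because $r^2$ dominates $\log r$. Taking $R_0$ as the maximum of the thresholds from the two cases completes the argument.

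\textbf{Main obstacle.} The delicate step is uniformity in $m$, since the lemma says \emph{any} $m\ge 2$ and $M$ is arbitrary. The tail bounds must therefore be chosen so that the required $r$ does not grow with $k$. This works because each side of the required inequalities scales linearly in $k$ with constants independent of $k$, and the chi-square tail bounds are uniform in $k$ once the deviations are normalized by $k$. The constant case $k=1$ in the lower tail is the binding one.
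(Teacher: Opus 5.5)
Your proof is correct and uses the same reduction as the paper. Both express $\pi_H$ through the chi-square likelihood ratio of $S^2$, turn each posterior threshold into a threshold on a $\chi^2_{m-1}$ variable, and then let $\sigma_H/\sigma_L$ grow.

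The one substantive difference is that you make the tail estimates uniform in $k=m-1$. You use Laurent--Massart for the upper tail in (i) and the Chernoff bound $(\epsilon e^{1-\epsilon})^{k/2}$ for the lower tail in (ii). As a result, your $R_0$ depends only on $(\mu,\Delta_L,\Delta_H,\delta)$, which is what the statement asserts.

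The paper's argument ($z_+\to\infty$ and $z_-\to 0$, so the tail probabilities tend to one) is carried out for a fixed $m$. As written, it only yields a threshold that depends on $m$, or on $M$ after maximizing over $2\le m\le M$. That is enough for its use in the uninformative-equilibrium theorem, but not for the lemma as literally stated, and your version closes that gap.
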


    \begin{proof}[Proof of \Cref{lem:upper_bound_belief_shift}]
        Let $S^2$ be the sample variance of the free samples. The Bayesian belief update based on $S^2$ results in a posterior belief with probability of variance is $\highvar$ given by \[\pi_H = \Pr(\sigma^2 = \highvar | S^2) = \frac{1}{1 + \frac{\mu}{1-\mu}\Lambda_m(S^2)},\]
        where $\Lambda_m(S^2) = f_{S^2|\lowsigma}(S^2) / f_{S^2|\highsigma}(S^2)$ is the likelihood ratio based of the sample variance of $m$ samples which can be written as 
        \[\Lambda_m = \left(\frac{\highvar}{\lowvar}\right)^{(m-1)/2} \exp\left(-\frac{\highvar - \lowvar}{2\highvar \lowvar}\, (m-1) S^2\right).\]

        To obtain $\pi_H < 1 - \mu - \Delta$, the likelihood ratio must satisfy
        \[\Lambda_m > T_+ := \frac{(1-\mu)(\mu + \Delta)}{\mu(1 - \mu - \Delta)}.\]
        Under $\lowvar$, let $Z = (m-1)S^2 / \lowvar \sim \chi^2_{m-1}$. Then
        \[\log \Lambda_m = (m-1)\log\frac{\highsigma}{\lowsigma} - \frac{\highvar - \lowvar}{2\highvar}\, Z.\]
        The condition $\Lambda_m > T_+$ is equivalent to $Z < z_+$, where
        \[z_+ := \frac{2\highvar}{\highvar - \lowvar}\left((m-1)\log\frac{\highsigma}{\lowsigma} - \log T_+\right).\]
        Note that $z_+$ is well defined when $T_+ > 0$, which is true when $\Delta \in (0, 1-\mu)$. Additionally, $z_+$ is positive when $\highsigma / \lowsigma$ is chosen to be larger than $T_+$. In this case, as $\highsigma / \lowsigma \to \infty$, $z_+ \to \infty$. As a result, $\Pr(Z \le z_+) \to 1$. Therefore, there is a choice of $\highsigma / \lowsigma$ large enough that $\Pr(Z \le z_+) \ge 1 - \delta$.

        To obtain $\pi_H > 1 - \mu + \Delta$, the likelihood ratio must satisfy
        \[\Lambda_m < T_- := \frac{(1-\mu)(\mu - \Delta)}{\mu(1 - \mu + \Delta)}.\]
        Under $\highvar$, let $Z' = (m-1)S^2/\highvar \sim \chi^2_{m-1}$. Then
        \[\log \Lambda_m = (m-1)\log\frac{\highsigma}{\lowsigma} - \frac{\highvar - \lowvar}{2\lowvar}\, Z'.\]
        The condition $\Lambda_m < T_-$ is equivalent to $Z' > z_-$, where
        \[z_- := \frac{2\lowvar}{\highvar - \lowvar}\left((m-1)\log\frac{\highsigma}{\lowsigma} - \log T_-\right).\]
        Note that $z_-$ is well defined when $T_- > 0$, which holds when $\Delta \in (0, \mu)$. As $\highsigma / \lowsigma \to \infty$, so $z_- \to 0$. As a result, $\Pr(Z' \ge z_-) \to 1$. Therefore, there is a choice of $\highsigma / \lowsigma$ large enough that $\Pr(Z' \ge z_-) \ge 1 - \delta$.

        Setting $R_0$ large enough that both tail probabilities are at most $\delta$ completes the proof.
    \end{proof}

    \paragraph{Upper bound on utility for low variance after deviation.} 

    Fix $\delta \in (0,1)$, $\Delta_L \in (0, 1 - \mu)$, and $\Delta_H \in (0, \mu)$. We will state the specific values of these parameters later. Suppose $\sigma_H/\sigma_L$ is large enough that, by \Cref{lem:upper_bound_belief_shift}, the posterior belief of $\highvar$ is at most $1 - \mu - \Delta_L$ when $\lowvar$ is realized and at least $1 - \mu + \Delta_H$ when $\highvar$ is realized, each with probability at least $1 - \delta$. 
    
    In the complementary event (probability at most $\delta$), we use the trivial upper bounds: selection probability $1$ and maximum payment $\sigma_H \sqrt{\psi(c_{\max}) / \lambda}$.
    
    When $\lowvar$ is realized and the belief shift holds, we bound the selection probability by $1$. The payment upon selection is at most  $\sqrt{\mathbb{E}[\sigma^2 \mid \lowvar]\,\psi(c_{\max}) / \lambda}$, which is at most \\ $\sqrt{((\mu + \Delta_L)\sigma_L^2 + (1 - \mu - \Delta_L) \sigma_H^2)\, \psi(c_{\max})/\lambda}$.
    
    When $\highvar$ is realized and the belief shift holds, we bound the utility upon selection by $\sigma_H \sqrt{\psi(c_{\max}) / \lambda}$. It remains to bound the selection probability. The expected posterior variance is at least $\sigma'^{\,2}(\Delta_H) := (\mu - \Delta_H)\lowvar + (1 - \mu + \Delta_H)\highvar$, which is strictly greater than $\sigma_0^2 = \mu\lowvar + (1-\mu)\highvar$. 
    Whenever $\sigma'^{\,2}(\Delta_H)\, \psi(c_{\min}) > \sigma_0^2\, \psi(c_{\max})$, the deviating seller is never selected when $\highvar$ is realized.
    
    This condition is equivalent to $\sigma'^{\,2}(\Delta_H)/\sigma_0^2 > \rho$, where $\rho := \psi(c_{\max})/\psi(c_{\min})$. Rearranging:
    \begin{align*}
        \frac{\sigma_0^2 + \Delta_H(\highvar - \lowvar)}{\sigma_0^2} > \rho 
        \quad\Longleftrightarrow\quad
        \frac{\highvar}{\lowvar} > \frac{1 + \mu(\rho - 1)}{\Delta_H - (1-\mu)(\rho - 1)}.
    \end{align*}
    Setting $\Delta_H = \mu/2$ and choosing $\mu > \frac{2\rho}{1 + 2\rho}$ ensures the denominator is positive. Then for $\highvar/\lowvar$ sufficiently large, the deviating seller is never selected when $\highvar$ is realized.

    Putting the pieces together, the expected utility after deviation is at most
    \begin{align*}
        \mathbb{E}_{\bm{c}}[U_i(\bm{c})] &\le \left(\mu \sqrt{(\mu + \Delta_L)\lowvar +  (1 - \mu - \Delta_L)\highvar} + \delta\, \highsigma \right) \sqrt{\frac{\psi(c_{\max})} {\lambda}}\,.
    \end{align*}

    We have fixed $\mu > \frac{2\rho}{1+2\rho}$ and $\Delta_H = \mu/2$. It remains to choose $\Delta_L$ and $\delta$ so that the upper bound above is strictly less than $\Bar{U}_{\mathrm{lb}}(0)$. This requires
    \begin{align*}
        \mu \sqrt{(\mu + \Delta_L)\lowvar +  (1 - \mu - \Delta_L)\highvar} + \delta\, \highsigma &\le \sqrt{\frac{(\mu \sigma_L^2 + (1-\mu) \sigma_H^2)\,\ell\,(c_{\max} - c_{\min})}{K(K+1)L^2}}\,.
    \end{align*}
    Setting $\delta = 1 - \mu - \Delta_L$, this reduces to showing
    \[
    \frac{(\mu + \Delta_L)\lowvar +  (1 - \mu - \Delta_L)\highvar}{\mu \lowvar + (1-\mu)\highvar} \le \xi,
    \]
    where $\xi > 0$ is a constant depending on $c_{\min}, c_{\max}, \ell, L, K$. As $\Delta_L \to 1 - \mu$ and $\highvar/\lowvar \to \infty$, the left-hand side tends to zero. Hence there exists $\epsilon > 0$ and $R_1 > 0$ such that setting $\Delta_L = 1 - \mu - \epsilon$ and $\delta = \epsilon$ satisfies the inequality whenever $\highsigma / \lowsigma > R_1$. 
    
    Finally, we choose $\sigma_L$ large enough to satisfy \eqref{eqn:approx_condition} with $\alpha$ such that $1/(\alpha-1) < \zeta$, and also choose $\sigma_H > \sigma_L.\max(R_0, {R_1})$, where $R_0$ is from \Cref{lem:upper_bound_belief_shift} to ensure belief shift holds with the chosen $\Delta_H, \Delta_L, \delta$. This ensures all conditions hold simultaneously, completing the proof.
\subsection{The maximally informative equilibrium}

In this section, we ask the question the other way around. If we fix the problem’s parameters and increase the number of sellers, does the incentive to share data dominate? It turns out that the answer is yes. In fact, as the number of sellers grows, sharing the maximum possible number of samples $M$ becomes an equilibrium---and indeed the unique equilibrium. The following result formalizes this claim.

\begin{theorem}[Equilibrium is maximally informative when there are many sellers] \label{theorem:maximally_informative}
Suppose that \Cref{ass:virtual_costs_monotonicity} and \Cref{ass:cost_distribution_bounds} hold.  Let $\sigma_L \ge \alpha \sqrt{\lambda \psi(c_{\max})}$ for some $\alpha \ge 3$. Then:
\begin{itemize}
\item For every $\sigma_L, \sigma_H, \mu$, there exists $\bar{K}$, such that, for every number of sellers $K \geq \bar{K}$, the strategy profile in which each seller shares the maximum possible number of free samples ($M$) and the buyer runs the mechanism in \Cref{prop:mechanism_soln} is a $1/(\alpha-1)$-approximate subgame perfect equilibrium.
\item Assuming the buyer runs the mechanism in \Cref{prop:mechanism_soln}, the above strategy profile is the unique equilibrium.
\end{itemize}
\end{theorem}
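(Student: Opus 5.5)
The approximation part is immediate. Under $\sigma_L \ge \alpha\sqrt{\lambda\psi(c_{\max})}$ every realized $n^*\ge \alpha$, so \Cref{prop:mechanism_soln} gives a $1/(\alpha-1)$-approximate BIC continuation mechanism. Everything therefore reduces to the sellers' free-sample stage. I would fix a seller $i$, fix the opponents' profile $\bm{m}_{-i}$, and compare $\bar U_i(m_i,\bm{m}_{-i})$ across $m_i$ as $K\to\infty$.

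\textbf{Step 1: an expression for the payoff and a lower bound for the symmetric profile $m=M$.} By \Cref{prop:mechanism_soln}, seller $i$'s payoff is
\[
\mathbb{E}\bigl[\mathds{1}\{i=i^*\}\,\lfloor n^*\rfloor\,F(c_i)/f(c_i)\bigr].
\]
On the winning event this quantity is bounded above and below by constants times $(c_i-c_{\min})$, using \Cref{ass:cost_distribution_bounds} and the range $\sigma_L/\sqrt{\lambda\psi(c_{\max})}\le n^*\le \sigma_H/\sqrt{\lambda\psi(c_{\min})}$. \Cref{lem:utility_symmetric_strategy} then gives $\bar U_i(M,\dots,M)\ge \bar U_{lb}=\Theta(1/K^2)$.

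\textbf{Step 2: a deviation from the symmetric $M$-profile to $m'<M$ loses.} When all rivals share $M$ samples, each rival's posterior mean $\bar\sigma_j^2$ is i.i.d., and $\bar\sigma_j^2\psi(c_j)$ has a distribution whose lower tail near $\sigma_L^2\psi(c_{\min})$ becomes thicker as $M$ grows. More precisely, the posterior puts mass near $\sigma_L^2$ with a probability that increases in $m$. The minimum over $K-1$ rivals of $\bar\sigma_j^2\psi(c_j)$ concentrates near the infimum of the support of this index, namely $\underline\sigma^2(M)\psi(c_{\min})$, where $\underline\sigma^2(m)$ is the essential infimum of the posterior mean with $m$ samples, i.e. its value as $S^2\to 0$. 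I would split the deviator's winning event according to its index threshold $t$. The deviator wins only if its index is below $\min_j$ of the rivals' indices. Its probability of having index $\le t$ is governed by the distribution of $(\bar\sigma_i^2(m'),c_i)$, and the rivals' $\Pr(\text{all } >t)=(1-G_M(t))^{K-1}$, where $G_m$ denotes the cdf of $\bar\sigma^2\psi(c)$ under $m$ samples.

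The key comparison is a \emph{likelihood-ratio-type dominance in the lower tail}: I aim to show $G_{m'}(t)/G_M(t)\to 0$ or stays bounded by some $r<1$ as $t\downarrow$ the common infimum. This holds because, near the bottom, the index distribution is the product of the cost cdf near $c_{\min}$, which is linear by \Cref{ass:cost_distribution_bounds}, and the lower-tail mass of the posterior mean. Fewer samples either raise the infimum $\underline\sigma^2(m')>\underline\sigma^2(M)$ or give a thinner tail, since the chi-square density with $m-1$ degrees of freedom near $0$ scales as $z^{(m-3)/2}$. In the case where the infimum is strictly larger, the deviator's winning probability is at most $(1-G_M(t_0))^{K-1}$ for some $t_0$ with $G_M(t_0)>0$. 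This decays exponentially in $K$, so for large $K$ it is below $\bar U_{lb}=\Theta(1/K^2)$ once multiplied by the bounded per-win utility. I expect that posterior means with more samples do reach strictly lower values, because the likelihood ratio $\Lambda_m$ at $S^2\to0$ equals $(\sigma_H/\sigma_L)^{m-1}$ and increases in $m$. This makes the strictly-larger-infimum case the generic one, and I would verify it carefully.

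\textbf{Step 3: uniqueness.} Take any profile $\bm m$ that is not all-$M$, and let $m_{\max}$ be the largest entry. If some seller has $m_i<m_{\max}$, the same exponential-versus-polynomial argument applies with the infimum $\underline\sigma^2(m_{\max})$. The caveat is that only the sellers sharing $m_{\max}$ matter, and their count could be small. I would handle this as follows: a seller playing $m_i<M$ who switches to $M$ obtains a strictly lower index infimum than everyone else. Its winning probability is then bounded below by the probability that its index falls below all others' infima, a constant independent of $K$, times a positive per-win utility. Meanwhile, in any profile its current payoff summed over players is bounded, so some low-$m$ seller earns $O(1/K)$, and in fact for large $K$ it earns $O(1/K)$ or exponentially small. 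The deviation to $M$ yields $\Omega(1)\cdot\Pr(\text{index}<\underline\sigma^2(m_{\max})\psi(c_{\min}))$ whenever $m_{\max}<M$. When $m_{\max}=M$, Step 2 shows the lower-$m$ sellers earn exponentially small amounts while switching to $M$ gives $\Omega(1/K^2)$-order payoffs. This requires a version of \Cref{lem:utility_symmetric_strategy} for asymmetric profiles among the $M$-sharers, which I would prove the same way by symmetry within that group. The main obstacle is making the lower-tail comparison uniform in how many rivals share each level. The case $m_{\max}<M$ needs uniformity in $K$ because the "$\Omega(1)$" win probability must beat the incumbent's share; I would try first the bound that the incumbent's payoff is at most its win probability times a constant, and argue that some non-$M$ seller has win probability $\le 1/K$.
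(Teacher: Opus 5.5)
Your first bullet is correct and follows the paper's route. The ``strictly larger infimum'' case is not merely generic; it always holds. The bound $\sigma_{lb}^2(m)$ of Lemma~\ref{lem:lower_bound_posterior} is strictly decreasing in $m$ and is approached as $S^2\to 0$, so you do not need the tail-ratio comparison of $G_{m'}$ and $G_M$. Take any threshold $t_0$ strictly between $\sigma_{lb}^2(M)\psi(c_{\min})$ and $\sigma_{lb}^2(M-1)\psi(c_{\min})$; this gives the comparison of $(1-q)^{K-1}$ against $\Omega(1/K^2)$, and the paper's event $E^{\text{select}}$ is exactly this. Your uniqueness argument for $m_{\max}<M$ is also correct, and cleaner than the paper's. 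There, the switcher wins with a $K$-independent probability (after restricting its cost away from $c_{\min}$ so the per-win utility is bounded below), while pigeonhole gives some seller win probability at most $1/K$.

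The gap is the uniqueness case $m_{\max}=M$ with few $M$-sharers. Your Step 2 bound on a lower-$m$ seller's payoff is $(1-q)^J$, which is exponential in the number $J$ of $M$-sharers, not in $K$. Take $J\lesssim \log K$, for example $J=1$ with everyone else sharing $M-1$. Then the best incumbent bound you have is the pigeonhole $O(1/(K-J))=O(1/K)$, and a switching payoff of order $1/K^2$ does not beat it, so nothing in your proposal rules out such profiles. A ``version of Lemma~\ref{lem:utility_symmetric_strategy} proven the same way'' yields bounds in $K$ and has the same defect.

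The missing idea, which is the paper's Lemma~\ref{lem:win_prob_high_samples}, is a lower bound on an $M$-sharer's payoff that depends only on the number of $M$-sharers. After the switch there are $J+1$ of them. With probability at least $1-(1-q)^{J+1}\ge q$, some $M$-sharer has index below $\sigma_{lb}^2(M-1)\psi(c_{\min})$, and on that event an $M$-sharer wins regardless of the other $K-J-1$ sellers (Lemma~\ref{lem:win_suff_cond}). Exchangeability among the $M$-sharers, combined with $F(c)/f(c)\ge(\ell/L)(c-c_{\min})$ for the per-win utility, gives each $M$-sharer a payoff of $\Omega(\mathrm{poly}(1/J))$ uniformly in $K$; the paper gets $\Omega(1/J^3)$.

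With this bound, split on $J$. If $J\ge K^{1/4}$, the non-$M$ sellers earn $O((1-q)^{K^{1/4}})$ while switching yields $\Omega(1/K^3)$. If $J<K^{1/4}$, the least-selected non-$M$ seller earns $O(1/(K-K^{1/4}))$, while switching (joining at most $K^{1/4}+1$ $M$-sharers) yields $\Omega(K^{-3/4})$. This also covers your $m_{\max}<M$ case as $J=0$.
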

\subsubsection{Proof of \Cref{theorem:maximally_informative}}

    The key driver of the incentive to sharing the maximum number of samples is that it allows the mean variance under the posterior to be low enough which makes the seller more prone to being selected. There is a fundamental limit to how low the posterior mean can be if a seller shares fewer than $M$ samples. This is given in the following lemma which establishes a lower bound on posterior mean as a function of number of free samples shared $(m)$.

     \begin{lemma}[Lower bound on mean of posterior]\label{lem:lower_bound_posterior}
        For any free sample set $S^f_m$ of size $m \ge 2$, the mean of the posterior over variance (based on the sample variance $S^2$ of the free samples) is at least
        \[\sigma_{lb}^2(m) := \sigma_L^2 + \frac{\sigma_H^2 - \sigma_L^2}{1 + \frac{\mu}{1-\mu} \left (\frac{\sigma_H}{\sigma_L} \right )^{m-1}}.\]
        Moreover for any $\sigma' > \sigma_{lb}$, there is a positive probability that the mean of the posterior is $\ge \sigma'$.
     \end{lemma}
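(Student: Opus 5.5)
The plan is to write the posterior mean explicitly as a function of the sample variance and then make it as small as possible by letting $S^2 \to 0$. From the proof of \Cref{lem:upper_bound_belief_shift}, the posterior probability of high variance is
\[
\pi_H(S^2) = \frac{1}{1 + \frac{\mu}{1-\mu}\Lambda_m(S^2)},
\qquad
\Lambda_m(S^2) = \left(\frac{\sigma_H^2}{\sigma_L^2}\right)^{(m-1)/2}\exp\!\left(-\frac{\sigma_H^2-\sigma_L^2}{2\sigma_H^2\sigma_L^2}(m-1)S^2\right).
\]
The mean of the posterior is therefore
\[
\bar\sigma^2(S^2) = \sigma_L^2 + (\sigma_H^2-\sigma_L^2)\,\pi_H(S^2).
\]
So it suffices to bound $\pi_H$ from below, which is the same as bounding $\Lambda_m$ from above.

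First, since $\sigma_H > \sigma_L$, the exponent in $\Lambda_m$ is nonpositive for every $S^2 \ge 0$. Hence $\Lambda_m(S^2) \le (\sigma_H/\sigma_L)^{m-1}$, with the supremum approached as $S^2 \to 0^+$. Because $\pi_H$ is decreasing in $\Lambda_m$, this gives
\[
\pi_H \ge \frac{1}{1+\frac{\mu}{1-\mu}(\sigma_H/\sigma_L)^{m-1}}.
\]
Substituting into the formula for $\bar\sigma^2$ yields exactly $\sigma_{lb}^2(m)$. This holds for every realization of the free samples, so the first claim follows.

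For the second claim, I read it as saying that for any threshold $\sigma'^2 > \sigma_{lb}^2(m)$, the posterior mean lies below $\sigma'^2$ with positive probability. This is the reading needed later, since it means a seller can get arbitrarily close to the bound. The written ``$\ge\sigma'$'' is presumably a typo; under that literal reading the statement would be trivial, via large $S^2$. The map $S^2 \mapsto \bar\sigma^2(S^2)$ is continuous and strictly increasing, and it tends to $\sigma_{lb}^2(m)$ as $S^2 \to 0^+$. Hence there is $s_0 > 0$ with $\bar\sigma^2(S^2) < \sigma'^2$ for all $S^2 \in (0, s_0)$.

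Under either true variance, $(m-1)S^2/\sigma^2 \sim \chi^2_{m-1}$, whose density is positive on $(0,\infty)$. So $\Pr(S^2 < s_0) > 0$, and hence the posterior mean falls in $[\sigma_{lb}^2(m), \sigma'^2)$ with positive probability. By the same continuity argument, $\Pr(\bar\sigma^2 \ge \sigma'^2) > 0$ as well, which covers the literal reading too.

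No step is a real obstacle. The only care needed is interpreting the second claim correctly and noting that the bound is an infimum that is not attained, since $S^2 > 0$ almost surely.
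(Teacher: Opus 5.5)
Your proof of the lower bound is the same as the paper's. Both write the posterior mean as $\sigma_L^2+\pi_H(\sigma_H^2-\sigma_L^2)$, note that the exponent in $\Lambda_m$ is nonpositive so $\Lambda_m\le(\sigma_H/\sigma_L)^{m-1}$, and use that $\pi_H$ is decreasing in $\Lambda_m$.

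The ``moreover'' clause is where you differ. The paper takes the literal reading. It fixes $\sigma_{lb}<\sigma'<\sigma_H$, picks the threshold $\Lambda'$ at which the posterior mean equals $\sigma'^2$, and notes that this corresponds to $S^2$ exceeding a positive value, an event of positive probability. You instead prove that the posterior mean falls below any level strictly above $\sigma_{lb}^2(m)$ with positive probability. Your argument uses continuity and strict monotonicity of $S^2\mapsto\bar\sigma^2(S^2)$ and positivity of the $\chi^2_{m-1}$ density near zero. This is the version actually invoked in the proof of \Cref{lem:win_suff_cond}, which needs a positive probability that $\bar\sigma_i\le\sigma_0$ for a midpoint $\sigma_0$ lying strictly above $\sigma_{lb}(M)$. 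So your reading is the one the later argument needs, and your proof supplies it directly, whereas the paper's written proof only gives the opposite tail.

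One small correction: your closing claim that $\Pr(\bar\sigma^2\ge\sigma'^2)>0$ for every $\sigma'^2>\sigma_{lb}^2(m)$ needs the restriction $\sigma'<\sigma_H$, which the paper imposes. Since $\Lambda_m>0$ gives $\pi_H<1$, the posterior mean always stays strictly below $\sigma_H^2$.
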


\begin{proof}[Proof of \Cref{lem:lower_bound_posterior}]
    The mean of the posterior distribution over variance can be written as
    \begin{align*}
        \mathbb{E}[\sigma^2 | S^2] &= \lowvar (1 - \pi_H) + \highvar \pi_H \\
        &= \lowvar + \pi_H (\highvar - \lowvar),
    \end{align*}
    where $\pi_H$ is the probability of variance $\highvar$ under the posterior based on the sample variance $S^2$ of $S^f_m = \{X_1, \ldots, X_m\}$.
    By Bayes' theorem,
    \begin{align*}
        \pi_H = \Pr(\sigma^2 = \highvar | S^2) = \frac{1}{1 + \frac{\mu}{1-\mu}\Lambda_m(S^2)},
    \end{align*}
    where $\Lambda_m(S^2) = f_{S^2|\lowsigma}(S^2) / f_{S^2|\highsigma}(S^2)$ is the likelihood ratio based on the sample variance of $m$ samples. Each likelihood function is the density of a scaled chi-squared distribution with $m-1$ degrees of freedom. Hence, the likelihood ratio can be written as: 
    \[\Lambda_m = \left(\frac{\highvar}{\lowvar}\right)^{(m-1)/2} \exp\left(-\frac{\highvar - \lowvar}{2\highvar \lowvar}\, (m-1)S^2\right).\]
    Since $S^2 \ge 0$ and $\highvar > \lowvar$, the exponential term is non-positive, so
    \[\Lambda_m \leq \left(\frac{\highvar}{\lowvar}\right)^{(m-1)/2} = \left(\frac{\highsigma}{\lowsigma}\right)^{m-1}.\]
    This upper bound on $\Lambda_m$ gives a lower bound on $\pi_H$:
    \[\pi_H \geq \frac{1}{1 + \frac{\mu}{1-\mu}\left(\frac{\highsigma}{\lowsigma}\right)^{m-1}},\]
    which yields the lower bound $\sigma_{lb}^2(m)$ stated in the lemma.

    For the moreover claim, fix any $\sigma_{lb} < \sigma' < \sigma_H$. There is a corresponding threshold $\Lambda' < (\highsigma/\lowsigma)^{m-1}$ such that the posterior mean equals $\sigma'^2$ when $\Lambda_m = \Lambda'$. This corresponds to $S^2$ exceeding a strictly positive value, which has positive probability.
\end{proof}

\newcommand{\esel}{E^{\text{select}}}
\newcommand{\ebar}{\bar{E}^{\text{select}}}
For any seller $i$ consider the event 
\[\esel_i = \{c_i \le c_0, \bar{\sigma_{i}} \le \sigma_0\},\]
where $c_0, \sigma_0$ are defined in the following way. Let $\sigma_0 = (\sigma_{lb}(M-1) + \sigma_{lb}(M))/2$, where $\sigma_{lb}$ is the lower bound on posterior mean given by \Cref{lem:lower_bound_posterior}. And $c_0 \gneq c_{\text{min}}$ satisfies the condition 
\[\sigma_0 \sqrt{\psi(c_0)} \le \sigma_{lb}(M-1) \sqrt{\psi(c_{\text{min}})}.\]
Such a $c_0$ exists since $\sigma_{lb}(M-1)$ is strictly bigger than $\sigma_0$ (given that $\sigma_{lb}(M-1) > \sigma_{lb}(M)$ as shown in \Cref{lem:lower_bound_posterior}) and since we assume the virtual cost function $\psi$ to be continuous and increasing.

\begin{lemma}\label{lem:win_suff_cond}
    Conditioned on the event $\esel$ for some seller $i$, no seller $i'$ sharing $m < M$ free samples will be selected. For any seller $i$ sharing $M$ free samples, there is a positive probability of $\esel_i$.
\end{lemma}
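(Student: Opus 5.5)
The plan is to combine the selection rule of \Cref{prop:mechanism_soln} with the posterior lower bound of \Cref{lem:lower_bound_posterior}. By \Cref{prop:mechanism_soln}, the buyer selects a seller minimizing $\bar{\sigma}_j^2\psi(c'_j)$. Since the costs are truthfully reported under BIC, this is the same as minimizing $\bar{\sigma}_j\sqrt{\psi(c_j)}$. So it suffices to show that, on $\esel_i$, seller $i$'s index is strictly smaller than the index of every seller $i'$ with $m_{i'}<M$.

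\textbf{First claim.} On $\esel_i$ we have $\bar{\sigma}_i\le\sigma_0$ and $c_i\le c_0$. Monotonicity of $\psi$ (\Cref{ass:virtual_costs_monotonicity}) and the defining condition on $c_0$ give
\[
\bar{\sigma}_i\sqrt{\psi(c_i)}\le \sigma_0\sqrt{\psi(c_0)}\le \sigma_{lb}(M-1)\sqrt{\psi(c_{\min})}.
\]
Now take any seller $i'$ sharing $m<M$ samples.
\begin{itemize}
\item If $m\ge 2$, then \Cref{lem:lower_bound_posterior} gives $\bar{\sigma}_{i'}\ge\sigma_{lb}(m)$. The bound $\sigma_{lb}(m)$ is decreasing in $m$, because $(\sigma_H/\sigma_L)^{m-1}$ increases with $m$. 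Hence $\bar{\sigma}_{i'}\ge\sigma_{lb}(M-1)$.
\item If $m=0$, the posterior equals the prior. Its mean variance $\mu\sigma_L^2+(1-\mu)\sigma_H^2$ equals the formula for $\sigma_{lb}$ with exponent replaced by $0$ (noting $\frac{\sigma_H^2-\sigma_L^2}{1+\mu/(1-\mu)}=(1-\mu)(\sigma_H^2-\sigma_L^2)$). So it also dominates $\sigma_{lb}(M-1)$.
\end{itemize}
In both cases $\psi(c_{i'})\ge\psi(c_{\min})$, so
\[
\bar{\sigma}_{i'}\sqrt{\psi(c_{i'})}\ge \sigma_{lb}(M-1)\sqrt{\psi(c_{\min})}.
\]
This yields a weak inequality. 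To make it strict, I would pick $c_0$ so that the condition on $c_0$ holds strictly. This is possible because $\sigma_0<\sigma_{lb}(M-1)$, so by continuity of $\psi$ we can take $c_0$ slightly above $c_{\min}$. Alternatively, ties occur only on a measure-zero set of costs. With strict inequality, $i'$ never attains the argmin. (The case $M-1=1$ would need care; I would treat $\sigma_{lb}(1)$ through the same formula, i.e.\ as the prior-type bound, or assume $M\ge 3$.)

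\textbf{Second claim.} The costs and the free samples are independent, so $\Pr(\esel_i)=\Pr(c_i\le c_0)\Pr(\bar{\sigma}_i\le\sigma_0)$.
\begin{itemize}
\item The first factor is at least $\ell(c_0-c_{\min})>0$ by \Cref{ass:cost_distribution_bounds}, since $c_0>c_{\min}$.
\item For the second factor, $\sigma_0>\sigma_{lb}(M)$. The posterior mean is a continuous increasing function of $S^2$ through $\Lambda_M$, and it tends to $\sigma_{lb}(M)^2$ as $S^2\to 0$. So $\bar{\sigma}_i\le\sigma_0$ holds whenever $S^2$ lies below some positive threshold $s_0$. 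Under either variance type, $(M-1)S^2/\sigma^2\sim\chi^2_{M-1}$ has positive density on $(0,\infty)$, so $\Pr(S^2<s_0)>0$.
\end{itemize}
This is the mirror image of the ``moreover'' part of \Cref{lem:lower_bound_posterior}.

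\textbf{Main obstacle.} The delicate part is the bookkeeping between $\bar{\sigma}$ as a standard deviation and $\sigma_{lb}^2$ as a variance, together with ensuring strictness in the comparison. Both are handled by shrinking $c_0$ toward $c_{\min}$.
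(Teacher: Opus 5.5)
Your proof is correct and follows the same route as the paper. You bound seller $i$'s index $\bar{\sigma}_i\sqrt{\psi(c_i)}$ above by $\sigma_{lb}(M-1)\sqrt{\psi(c_{\min})}$ using the definition of $c_0$, and bound every index of a seller sharing $m<M$ samples below by the same quantity via monotonicity of $\sigma_{lb}(\cdot)$; you then get $\Pr(E^{\text{select}}_i)>0$ from the independent events $\{c_i\le c_0\}$ and $\{S^2 \text{ small}\}$. Your version is also more careful than the paper's sketch in three places: you handle $m=0$ through the prior mean $\sigma_{lb}^2(1)$, you secure the strict inequality, and you derive $\bar{\sigma}_i\le\sigma_0$ directly from $S^2\to 0$ (the paper misstates this direction, citing $\sigma_0<\sigma_{lb}(M)$ and the ``$\ge$'' clause of the earlier lemma).
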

\begin{proof}[Proof of \Cref{lem:win_suff_cond}]
    For any seller $i'$ sharing $m < M$ samples, the posterior mean satisfies $\bar{\sigma}_i \ge \sigma_{lb}(m)$ by \Cref{lem:lower_bound_posterior}. From that lemma, we also see that $\sigma_{lb}(m) \ge \sigma_{lb}(M-1) > \sigma_{lb}(M)$ since the lower bound established there is strictly decreasing as number of free samples increases. Hence $\bar{\sigma}_{i'}$ is strictly greater than $\sigma_0$ which is midpoint of $\sigma_{lb}(M-1)$ and $\sigma_{lb}(M)$. Additionally, $c_{i'} \ge c_{\text{min}}$. As a result, $\sigma_0 \sqrt{\psi(c_0)} \le \sigma_{lb}(M-1) \sqrt{\psi(c_{\text{min}})} < \bar{\sigma}_{i'} \sqrt{\psi(c_{i'})} $.

    If there is a seller $i$ with event $\esel_i$ realized, then for this seller, $\bar{\sigma}_i \sqrt{\psi(c_i)} \le \sigma_0 \sqrt{\psi(c_0)}$. Comparing to seller $i'$ sharing $m < M$ samples, $\bar{\sigma}_{i} \sqrt{\psi(c_{i})} < \bar{\sigma}_{i'} \sqrt{\psi(c_{i'})}$ meaning that seller $i'$ will not be selected.

    Any seller $i$ sharing $M$ samples has a positive probability of having $\bar{\sigma}_{i} \le \sigma_0$. This is due to $\sigma_0$ being strictly less than $\sigma_{lb}(M)$. \Cref{lem:lower_bound_posterior} establishes a positive probability for this. Furthermore, there is a positive probability of $c_i \in [c_0/2, c_0]$ which is at least $\ell c_0 / 2(c_{\text{max}} - c_{\text{min}})$ from our assumption on the lower bound on densities of the cost distribution. Therefore, there is a positive probability of $\esel_i$. 
\end{proof}

A consequence of this lemma is that the probability of a seller sharing $m < M$ samples being selected decays exponentially in the number of sellers sharing $M$ samples. This is because for the seller to be selected, the event $\esel$ must not be realized for \textit{any} of the $M$ samples sharing sellers, and this event is independent for each of the sellers. This is stated in the following lemma.

\begin{lemma}[Upper bound on expected utility when sharing $m < M$ samples]\label{lem:win_prob_low_samples}
    Suppose $J$ of the $K$ sellers share $M$ samples. Then the expected utility of a seller sharing $m < M$ samples is at most $(1 - q)^J \sigma_H \sqrt{\psi(c_{\max}) / \lambda}$, where for $q > 0$ is the probability of event $\esel$ for a seller sharing $M$ samples.
\end{lemma}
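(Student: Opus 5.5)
The plan is to split the deviating seller's expected utility into the probability of being selected times the largest utility it can earn once selected. The selection probability is handled with \Cref{lem:win_suff_cond} and independence across sellers; the conditional utility is handled with the payment characterization of \Cref{prop:mechanism_soln}.

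\emph{Step 1 (utility upon selection).} Fix a seller $i$ sharing $m<M$ samples. By the Myerson payment rule, its utility is $n_i(\bm c)(\psi(c_i)-c_i)\le n_i(\bm c)\psi(c_i)$ pointwise. If $i$ is selected, $n_i\le n^*=\bar\sigma_i/\sqrt{\lambda\psi(c_i)}$. Hence
\[
n_i\psi(c_i)\le \bar\sigma_i\sqrt{\psi(c_i)/\lambda}\le \sigma_H\sqrt{\psi(c_{\max})/\lambda},
\]
using $\bar\sigma_i^2\le\sigma_H^2$, since the posterior mean is a convex combination of $\sigma_L^2$ and $\sigma_H^2$, and the monotonicity of $\psi$ from \Cref{ass:virtual_costs_monotonicity}. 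The pointwise inequality is cleanest if we work with the ex-post Myerson payment $t_i=n_i\psi(c_i)$ in expectation, which suffices because only the expected utility is being bounded. If the seller is not selected, its utility is zero. So
\[
\mathbb E[U_i]\le \sigma_H\sqrt{\psi(c_{\max})/\lambda}\cdot\Pr(i \text{ selected}).
\]

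\emph{Step 2 (selection probability).} Let $\mathcal J$ be the set of $J$ sellers sharing $M$ samples. By \Cref{lem:win_suff_cond}, if $\esel_j$ occurs for some $j\in\mathcal J$, then seller $i$ is not selected. Therefore
\[
\Pr(i\text{ selected})\le \Pr\Bigl(\bigcap_{j\in\mathcal J}(\esel_j)^c\Bigr).
\]
Each event $\esel_j=\{c_j\le c_0,\ \bar\sigma_j\le\sigma_0\}$ depends only on seller $j$'s own cost, variance type and free samples. These are independent across sellers by the model assumptions, so the intersection has probability $\prod_{j}(1-\Pr(\esel_j))$. All sellers in $\mathcal J$ share the same number $M$ of samples and have identically distributed types, so $\Pr(\esel_j)=q$ for each $j$. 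Moreover $q>0$ by the second part of \Cref{lem:win_suff_cond}. This gives $(1-q)^J$, and combining with Step 1 yields the claim.

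\emph{Main obstacle.} The delicate point is making Step 1 rigorous. The utility decomposition must be valid ex post or at least in a form compatible with conditioning on the selection event and the free samples. The Myerson payment is only pinned down at the cost-interim level, so I would define the concrete payment rule as $t_i(\bm c)=n_i(\bm c)\psi(c_i)$. That is, I would bound $\mathbb E[n_i(\bm c)F(c_i)/f(c_i)]$, which equals the expected utility by the computation in the proof of \Cref{prop:mechanism_soln}, and note that $F/f\le\psi$ since $c\ge 0$. Then the integrand is supported on the selection event, and the bound goes through.
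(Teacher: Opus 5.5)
Your proposal is correct and follows the paper's argument. You bound the selection probability by $(1-q)^J$ via Lemma~\ref{lem:win_suff_cond} and independence of the events $E^{\text{select}}_j$ across the $J$ sellers sharing $M$ samples, and multiply by the per-selection bound $\sigma_H\sqrt{\psi(c_{\max})/\lambda}$. You justify that bound more carefully than the paper, which simply asserts it, by using the identity $\mathbb{E}[U_i\mid\bm{\pi}]=\mathbb{E}[n_i(\bm{c})F(c_i)/f(c_i)\mid\bm{\pi}]$.

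One wording fix: do not present this as redefining the ex-post payment rule to be $t_i(\bm{c})=n_i(\bm{c})\psi(c_i)$. That rule is not BIC and would change the continuation mechanism; the expectation identity, which holds for the actual Myerson payment, is all your argument uses.
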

\begin{proof}[Proof of \Cref{lem:win_prob_low_samples}]
We will show that the probability of a seller sharing $m < M$ samples being selected is at most $(1 - q)^J$ and this results in the upper bound on utility assuming maximum utility upon selection. This upper bound on selection probability follows directly from \Cref{lem:win_suff_cond}.  From \Cref{lem:win_suff_cond}, we know that if there is any seller $i$ sharing $M$ samples, there is a positive probability $\ge q$ of event $\esel_i$ which results in the seller $i'$ sharing $m < M$ samples not being selected. So for seller $i'$ to be selected, each of the $J$ sellers sharing $M$ samples should not have event $\esel$ realized. This event is independent across all sellers. So the probability of this event not occurring for any seller is $(1 - q)^J$.
\end{proof}

Next we will show a lower bound on utility of a seller sharing $M$ samples. This lemma shows that the expected utility of a seller sharing $M$ samples is of the order $\Omega(1/J^2)$, where $J$ is the number of sellers sharing $M$ samples.

\begin{lemma}[Lower bound on utility when sharing $M$ samples]\label{lem:win_prob_high_samples}
    Suppose $J$ of the $K$ sellers share $M$ samples. Then the expected utility of a seller sharing $M$ samples is at least 
    \[\frac {(1 - (1-q)^J)\ell}{L^2 J^2} \frac{\frac 1 J - (1 - \ell(c_0 - c_{\text{min}}))^J}{(1 - (1 - \ell(c_0 - c_{\text{min}}))^J)} \in \Omega(1/J^3),\] 
    where $q > 0$ is the probability of event $\esel$ for a seller sharing $M$ samples.
\end{lemma}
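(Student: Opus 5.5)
We follow the template of \Cref{lem:utility_symmetric_strategy}: express a seller's expected utility through the Myerson payment, lower bound the information rent on the selection event, and then estimate the probability of that event. Fix a seller $i$ sharing $M$ samples, and let $\mathcal{J}$ be the set of $J$ sellers sharing $M$ samples, with $i \in \mathcal{J}$. By \Cref{prop:mechanism_soln}, seller $i$'s expected utility is $\mathbb{E}_{\bm{c}}[n_i(\bm{c})(\psi(c_i)-c_i)] = \mathbb{E}[n_i F(c_i)/f(c_i)]$. Using \Cref{ass:cost_distribution_bounds}, $F(c_i)/f(c_i) \ge \ell(c_i - c_{\min})/L$, and $n_i \ge 1$ whenever $i$ is selected, since $\sigma_L \ge \alpha\sqrt{\lambda\psi(c_{\max})}$. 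Hence the utility is at least $(\ell/L)\,\mathbb{E}[\mathds{1}\{i \text{ selected}\}(c_i - c_{\min})]$.

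Next we restrict to the event $G$ that at least one seller in $\mathcal{J}$ realizes $\esel$. Sellers realize $\esel$ independently, so $\Pr(G) = 1-(1-q)^J$. By \Cref{lem:win_suff_cond}, on $G$ no seller outside $\mathcal{J}$ is selected, so the winner lies in $\mathcal{J}$. Conditional on $G$, the $J$ sellers in $\mathcal{J}$ are exchangeable, and the indicator of $G$ is a symmetric function of their types. Therefore
\[
\mathbb{E}[\mathds{1}\{i \text{ sel}\}\mathds{1}_G (c_i-c_{\min})] \ge \mathbb{E}[\mathds{1}\{i \text{ sel}\}\mathds{1}_G(\min_{j\in\mathcal{J}}c_j - c_{\min})] = \tfrac{1}{J}\,\mathbb{E}[\mathds{1}_G(\min_{j\in\mathcal{J}}c_j - c_{\min})],
\]
exactly as in \Cref{lem:utility_symmetric_strategy}. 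The remaining task is to lower bound $\mathbb{E}[\min_{j\in\mathcal{J}} c_j - c_{\min} \mid G]$.

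This step is the main obstacle. Conditioning on $G$ biases costs downward, because $\esel$ requires $c_j \le c_0$. Write $p := \Pr(c_j \le c_0) \ge \ell(c_0 - c_{\min})$. The plan is to show $\mathbb{E}[\min_j c_j - c_{\min}\mid G] \gtrsim \frac{1}{LJ}\cdot \frac{\frac1J-(1-p)^J}{1-(1-p)^J}$. The first route is to condition further on the number $N \ge 1$ of sellers in $\mathcal{J}$ with $c_j \le c_0$. Given $N$, those costs are i.i.d. from $F$ truncated to $[c_{\min},c_0]$. As in the integral computation of \Cref{lem:utility_symmetric_strategy}, the minimum of $N$ such draws has expectation above $c_{\min}$ of at least about $p/(L(N+1))$. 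Since $N\ge1$ is needed for $G$, and then averaging $1/(N+1)$ against the binomial law conditioned on $N\ge1$, yields a factor comparable to $\frac{1/J - (1-p)^J}{1-(1-p)^J}$. The second route, if the conditioning on $\bar\sigma_j$ complicates this, is to use that quality signals are independent of costs: $G$ depends on costs only through the indicators $\{c_j\le c_0\}$, so this truncation argument applies with the $\bar\sigma$-part integrated out. We then track constants loosely, absorbing $p$ and $q$ into the $\Omega(\cdot)$.

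Finally, multiply the factors $\ell/L$, $\Pr(G)=1-(1-q)^J$, $1/J$, and the conditional-minimum bound to obtain the stated expression. For the asymptotic claim, note that $1-(1-q)^J \to 1$, and that $\frac{1/J-(1-p)^J}{1-(1-p)^J} \sim 1/J$ because $(1-p)^J$ decays exponentially. The product is therefore $\Omega(1/J^3)$.
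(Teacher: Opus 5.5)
\textbf{There is a gap in your last step.} Your first two steps match the paper exactly: the Myerson rent bound $\mathbb{E}[n_iF(c_i)/f(c_i)]\ge(\ell/L)\,\mathbb{E}[\mathbf{1}\{i\text{ selected}\}(c_i-c_{\min})]$, the restriction to $G$ with $\Pr(G)=1-(1-q)^J$, and the exchangeability argument giving $\frac1J\mathbb{E}[\mathbf{1}_G(\min_{j\in\mathcal J}c_j-c_{\min})]$.

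The problem is that conditioning on $G$ is not the same as conditioning on $\{N\ge 1\}$. Let $q'=\Pr(\bar\sigma_j\le\sigma_0)$, so $q=pq'$. Integrating out the posteriors gives $\Pr(N=n\mid G)\propto\binom{J}{n}p^n(1-p)^{J-n}\bigl(1-(1-q')^n\bigr)$. This reweighting is increasing in $n$, so $\mathbb{E}[\frac{1}{N+1}\mid G]\le\mathbb{E}[\frac{1}{N+1}\mid N\ge1]$, and your substitution goes the wrong way. Your ``second route'' does correctly show that, given $G$ and the set of low-cost sellers, their costs are i.i.d.\ from $F$ truncated to $[c_{\min},c_0]$. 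It does not fix the law of $N$.

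The intermediate target $\frac{1}{LJ}\cdot\frac{1/J-(1-p)^J}{1-(1-p)^J}$ has further problems:
\begin{itemize}
\item It is not what your route produces. The route gives a bound of order $1/J$, not $1/J^2$.
\item It is false for small $J$. At $J=1$ it requires $\mathbb{E}[c_1-c_{\min}\mid c_1\le c_0]\ge 1/L$. This fails whenever $c_0-c_{\min}<1/L$, which always holds for uniform costs with $c_0<c_{\max}$.
\item Replacing $(1-p)^J$ by $(1-\ell(c_0-c_{\min}))^J$ is fine in the numerator but goes the wrong way in the denominator.
\end{itemize}
So ``comparable to'' does not deliver the displayed expression.

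\textbf{The repair is one line and gives a sharper result.} Since $N\le J$,
$\mathbb{E}[\min_{j\in\mathcal J}c_j-c_{\min}\mid G]\ge\frac{p}{L}\,\mathbb{E}[\frac{1}{N+1}\mid G]\ge\frac{p}{L(J+1)}$.
Hence $\mathbb{E}[U_i]\ge\frac{\ell\,p\,(1-(1-q)^J)}{L^2J(J+1)}=\Omega(1/J^2)$. This implies the $\Omega(1/J^3)$ claim, and that order is all \Cref{theorem:maximally_informative} uses.

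For comparison, the paper handles $G$ differently. It replaces conditioning on $G$ by conditioning on $A=\{\min_{j\in\mathcal J}c_j\le c_0\}$, at the cost of a factor $1/J$. That step is justifiable because $q'\le\Pr(G\mid\bm c)\le Jq'$ on $A$. It then integrates $\Pr(\min_j c_j\ge x\mid A)$, and the extra $1/J$ is why it only reaches $\Omega(1/J^3)$.

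The paper's final numerical inequality is also off in the wrong direction. The bound $f\le L$ gives $\int_{c_{\min}}^{c_0}(1-F(x))^J\,dx\ge\frac{1-(1-F(c_0))^{J+1}}{L(J+1)}$, not $\frac{1}{LJ}$, and the claimed inequality fails at $J=1$ for uniform costs. So you should aim for the order of growth rather than try to reverse-engineer the exact displayed constant.
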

\begin{proof}[Proof of \Cref{lem:win_prob_high_samples}]
    Let $\mathcal{J}$ denote the index set of the $J$ sellers sharing $M$ samples. Consider the event $\bar{E}^{\text{select}} = \bigcup_{j \in \mathcal{J}} \esel_i$ which is the event that one of the $J$ sellers sharing $M$ samples realizes event $\esel$. Whenever $\bar{E}^{\text{select}}$ occurs, one of the sellers in $\mathcal{J}$ gets selected by \Cref{lem:win_suff_cond}. The probability of $\bar{E}^{\text{select}}$ is at least $1 - (1-q)^J$. We provide a lower bound on the expected utility by a lower bound on the expected utility conditioned on $\bar{E}^{\text{select}}$ times the probability of $\bar{E}^{\text{select}}$.

    To bound the expected utility conditioned on $\bar{E}^{\text{select}}$, we will perform an analysis similar to the analysis in \Cref{lem:utility_symmetric_strategy} which expresses a lower bound on expected utility of a seller as in \Cref{eqn:expected_utility_lower_bound}. We will extend this analysis to bound the conditional expected utility. 

    Note that conditioning on $\bar{E}^{\text{select}}$ retains the symmetry in distributions of sellers in $\mathcal{J}$. Note that the event $\bar{E}^{\text{select}}$ is symmetric for $\mathcal{J}$.  That is, if a tuple $((c_j, \bar{\sigma}_j))_{j \in \mathcal{J}} \in \bar{E}^{\text{select}}$, then $((c_{\beta(j)}, \bar{\sigma}_{\beta(j)})_{j \in \mathcal{J}} \in \bar{E}^{\text{select}}$ for every permutation $\beta$ of $\mathcal{J}$. As a result the distribution of each $(c_j, \bar{\sigma}_j)$ conditioned on event $\bar{E}^{\text{select}}$ is the same for every $j \in \mathcal{J}$. 

    As we did in \Cref{lem:utility_symmetric_strategy}, using the payment derived in \Cref{prop:mechanism_soln}, we can write a lower bound on expected utility of a seller in $J$ conditioned on the event $\ebar$ as
    \begin{align}
           \mathbb{E}_{\bm{c}}[U_j(\bm{c}) | \ebar] &\ge \frac{\ell}{L} \mathbb{E}_{\bm{c}}[\mathds{1}\{\text{buyer buys from } j\} (\min_{j \in \mathcal{J}} c_j - c_{\text{min}}) | \ebar]. \label{eqn:expected_cond_utility_lower_bound}
       \end{align}

       {Consider the lower-bound quantity from \Cref{eqn:expected_cond_utility_lower_bound}. If we sum up the lower bound over all the sellers, we get the value $\ell/L\mathbb{E}_{\bm{c}}[\min_{j \in \mathcal{J}} c_j - c_{\text{min}} | \ebar]$. This uses the fact that summing $\mathbb{E}_{\bm{c}}[\mathds{1}\{\text{buyer buys from } j | \ebar\}]$ over all $j \in \mathcal{J}$ is one. Additionally, this quantity is equal for every seller by symmetry since the conditional distribution of every seller's parameter, conditioned on $\ebar$ is the same. Therefore, we can write}
       \begin{align*}
           \mathbb{E}_{\bm{c}}[U_j(\bm{c}) | \ebar] &\ge \frac{\ell }{LJ} \mathbb{E}_{\bm{c}}[(\min_{j \in \mathcal{J}} c_j - c_{\text{min}} | \ebar] \\
       \end{align*}

       Finally, we  bound $\mathbb{E}_{\bm{c}}[(\min_{j \in \mathcal{J}} c_j - c_{\text{min}} | \ebar]$ to obtain the lower bound on the expected utility. We again use the lower and upper bounds on cost densities: 
        \begin{align*}
            \mathbb{E}_{\bm{c}}[\min_{j \in \mathcal{J}} c_j - c_{\text{min}} | \ebar] &= \mathbb{E}_{\bm{c}}[\min_{j \in \mathcal{J}} c_j - c_{\text{min}} | \exists j \in \mathcal{J} \text{ s.t. } c_j < c_0, \bar{\sigma_j} < \sigma_0].
            \intertext{By independence of costs and sample distributions and the fact that the cost and sample distributions conditioned on $\bar{E}^{\mathrm{select}}$ is the same for every seller in $\mathcal{J}$, the probability of a seller having cost lower than $c_0$ having posterior mean lower than $\sigma_0^2$ is $1/|J|$. Consequently, }
            \mathbb{E}_{\bm{c}}[\min_{j \in \mathcal{J}} c_j - c_{\text{min}} | \exists j \in \mathcal{J} \text{ s.t. } c_j < c_0, \bar{\sigma_j} < \sigma_0] & \ge \frac 1 J \ \mathbb{E}_{\bm{c}}[\min_{j \in \mathcal{J}} c_j - c_{\text{min}} | \exists j \in \mathcal{J} \text{ s.t. } c_j < c_0] \\
            &= \frac 1 J \int_{c_{\text{min}}}^{c_{\text{max}}} \Pr[\min_{j \in \mathcal{J}} c_j \ge x | \min_{j \in \mathcal{J}} c_j \le c_0] dx  \\
            &= \frac 1 J \int_{c_{\text{min}}}^{c_0} \frac{\Pr[\min_{j \in \mathcal{J}} c_j \ge x] - \Pr[\min_{j \in \mathcal{J}} c_j \ge c_0]}{\Pr[\min_{j \in \mathcal{J}} c_j \le c_0]} dx  \\
            &= \frac 1 J \int_{c_{\text{min}}}^{c_0} \frac{(1 - F(x))^J - (1 - F(c_0))^J}{1 - (1 - F(c_0))^J} dx 
            \intertext{Using, $f(x) \le L$ for all $x \in [c_{\text{min}}, c_{\text{max}}]$,}
            \mathbb{E}_{\bm{c}}[\min_{j \in \mathcal{J}} c_j - c_{\text{min}} | \exists j \in \mathcal{J} \text{ s.t. } c_j < c_0, \bar{\sigma_j} < \sigma_0]  &  \ge \frac 1 J \frac{\frac 1 J - \left ( 1 - \ell(c_0 - c_{\text{min}})\right )^J}{L \left (1 - \left ( 1 - L(c_0 - c_{\text{min}})\right )^J \right ).}
        \end{align*}
        
       Finally, combining this lower bound on expected utility conditioned on $\ebar$ and the probability of $(1 - (1-q)^J)$ of event $\ebar$ concludes the proof. 
    
\end{proof}       

Using these lemmas, we now establish that the unique approximate equilibrium is one in which all sellers share the maximum number of samples.

\paragraph{Deviation from maximally informative strategy profile.} Consider a unilateral deviation from the profile with all sellers sharing $M$ samples to sharing $m$ samples. From \Cref{lem:utility_symmetric_strategy}, we know that before deviation, the utility of each seller is at least $\Omega(1/K^2)$. After deviation, from \Cref{lem:win_prob_low_samples}, we know the expected utility is at most $O((1 - q)^{K-1})$ for a $q > 0$ corresponding to the probability of event $\esel$ for a non-deviating seller. Since deviation results in expected utility going from decaying quadratically in number of sellers $K$ to decaying exponentially, for sufficiently large $K$, the utility after deviation is lower than the utility before deviation, making the maximally informative sample sharing strategy an equilibrium. 

\paragraph{Deviation from any other sample sharing profile.} Consider any other strategy profile where at least one seller shares strictly fewer than $M$ samples. Consider the seller $i'$ sharing $m < M$ samples with the least probability of being selected. We will show if $i'$ unilaterally deviates to sharing $M$ samples, the expected utility of $i'$ strictly increases. 

Let us consider two cases for $J$, which is the number of sellers sharing $M$ samples. The first case is that $J \ge K^{1/4}$. In this case, by \Cref{lem:win_prob_low_samples}, seller $i$'s expected utility before deviation is $O((1-q)^{K^{1/4}})$. After deviation, since $i'$ becomes one among the $J+1$ sellers sharing $M$ samples, the expected utility becomes at least $\Omega(1/K^3)$ (since $J \le K$). So far large enough $K$, deviation strictly increases utility.

The other case is $J \le K^{1/4}$. Since $i'$ is the one among the $K - J$ sellers with the least probability of being selected, $i'$ has a probability of being selected, and hence expected utility, at most $O(1/(K - \sqrt{K}))$. After deviation, the expected utility becomes $\Omega(1/K^{3/4})$. Again for large enough $K$, deviation strictly increases utility. 

Choosing a large enough $K$, larger than what is required for both cases, concludes the proof.

\section{Numerical Simulations}\label{sec:experiments}

Theorems~\ref{thm:uninformative_equilibrium} and~\ref{theorem:maximally_informative} characterize the uninformative and maximally informative equilibria in limiting regimes: the former when $\highsigma/\lowsigma$ is large, the latter when $K$ is large. To map out the equilibrium structure across the full parameter space---and in particular to probe whether additional equilibria exist---we conduct Monte Carlo simulations.

We simulate the game with $K$ symmetric sellers whose per-sample costs are drawn i.i.d.\ from $\mathrm{Uniform}[c_{\min}, c_{\max}]$ and whose data variances are $\lowvar$ with probability $\mu$ and $\highvar$ with probability $1-\mu$. The buyer runs the approximately optimal mechanism of \Cref{prop:mechanism_soln}. We fix $\highsigma = 50$ and vary $\lowsigma = \highsigma / r$ to sweep the ratio $r = \highsigma/\lowsigma$, with $\mu = 0.6$, $\lambda = 0.007$, $c_{\min} = 0.5$, $c_{\max} = 2.0$, and $M = 5$. For each candidate symmetric strategy $m^* \in \{0, 2,3, \dots, M\}$, we estimate seller~1's expected profit under every deviation using $N = 100{,}000$ Monte Carlo samples. We consider symmetric strategies and find the strategy where each seller shares $m^*$ samples an equilibrium only if its estimated profit exceeds every alternative by at least two combined standard errors. While this approach ensures that the detected equilibria are indeed equilibria with significant certainty, it cannot detect equilibria where deviation does not decrease utility or decreases utility by a small amount. 

\Cref{fig:phase_diagram} maps the equilibrium structure across $K \in \{2, \dots, 10\}$ and $\highsigma/\lowsigma \in \{2, \dots, 300\}$. Five regions emerge:

\begin{figure}[t]
    \centering
    \includegraphics[width=\textwidth]{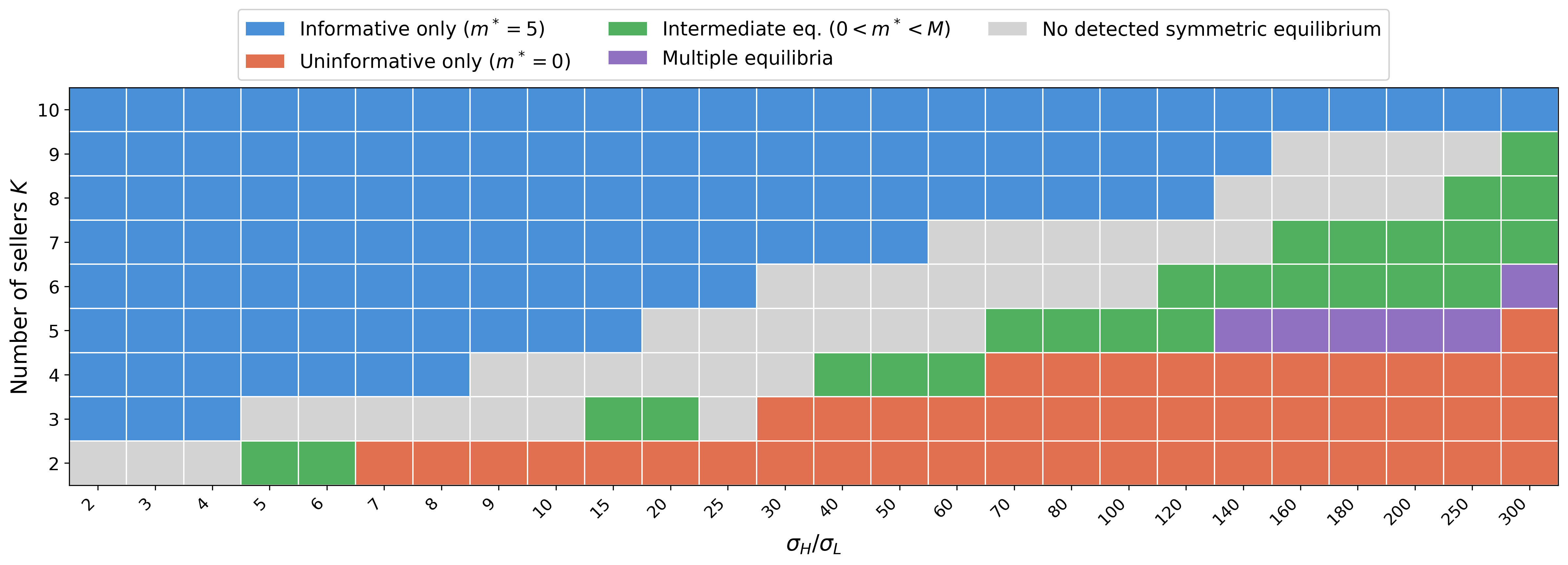}
    \caption{Phase diagram of symmetric equilibria across $(K,\, \highsigma/\lowsigma)$.  Each cell is colored based on the equilibrium detected in the corresponding regime of parameters. Blue: only the informative equilibrium ($m^* = M$). Red: only the uninformative ($m^* = 0$). Green: only an intermediate equilibrium ($m^* = 2$). Purple: coexistence of multiple symmetric equilibria. Gray: no symmetric equilibrium found.}
    \label{fig:phase_diagram}
\end{figure}

\begin{enumerate}[leftmargin=*]
    \item \emph{Informative equilibrium} (blue). At low $\highsigma/\lowsigma$ or high $K$, the unique symmetric equilibrium is $m^* = M$. For $K \ge 10$, this is the only equilibrium at every ratio tested.

    \item \emph{Uninformative equilibrium} (red). At high $\highsigma/\lowsigma$ and small $K$, the equilibrium $m^* = 0$ exists. 

    \item \emph{Intermediate equilibrium} (green). Between the maximally informative and maximally uninformative regions, we have an intermediate equilibrium with $m^* = 2$ emerges. This equilibrium appears for every $K$ from $2$ to $9$ at appropriate ratios.

    \item \emph{Multiple equilibria} (purple). For a band of parameters (e.g., $K=5$ at $\highsigma/\lowsigma \in [140, 250]$; $K=6$ at $\highsigma/\lowsigma =300 $), both $m^* = 0$ and $m^* = 2$ are simultaneously equilibria. 

    \item \emph{No detected symmetric equilibrium} (gray). In some regions, we do not detect any symmetric strategies to be equilibria. Note that this does not mean that there are no symmetric equilibria for those parameters. But rather, we are not able to detect it given the uncertainty in our Monte Carlo sampling. 
\end{enumerate}

\section{Conclusion}
This paper studies the strategic role of free trials in competitive data markets. We model free trials as a pre-auction disclosure decision: before a buyer arrives, each seller commits to releasing a limited number of free samples, after which the buyer runs a procurement mechanism to purchase data. To make the continuation game tractable, we analyze a simple mechanism that (approximately) optimizes the buyer’s tradeoff between statistical accuracy and procurement cost. The resulting continuation policy induces a clear, belief-adjusted comparison of sellers based on both reported costs and posterior beliefs about data quality inferred from the free samples.

Our analysis delivers two main insights. First, free trials need not emerge in equilibrium: when the quality gap is sufficiently large, sellers may prefer to remain uninformative rather than risk revealing adverse information. Second, when competition is sufficiently intense, maximal disclosure becomes the unique prediction: with many sellers, the probability that at least one seller appears highly attractive under the buyer’s selection rule increases, and free samples become a decisive signal for winning the procurement auction. 

There are several natural directions for future work. One extension is to allow sellers’ datasets to be heterogeneous not only in variance but also in mean, so that sellers’ data may exhibit different biases or systematic shifts relative to the buyer’s target. A second direction is to relax the assumption that free samples are i.i.d. draws from the same distribution as the purchased data. 
Finally, it would be valuable to extend the analysis beyond Gaussian data. In this paper the Gaussian assumption is used primarily to obtain a tractable likelihood for the sample variance and to derive likelihood-ratio bounds that drive the belief-updating and equilibrium arguments. Similar results should be attainable for broader families of distributions as long as one can derive suitable bounds on the relevant likelihood ratios (or comparable concentration bounds) for the statistic used to summarize free samples.

\section{Acknowledgments}
Part of this work was supported by the Simons Institute for the Theory of Computing, and conducted when Alireza Fallah visited the Institute.
We also wish to acknowledge funding by the European Union (ERC-2022-SYG-OCEAN-101071601).
Views and opinions expressed are however those of the author(s) only and do not
necessarily reflect those of the European Union or the European Research Council
Executive Agency. Neither the European Union nor the granting authority can be
held responsible for them.
\bibliography{references}

\end{document}